\documentclass{article}

\usepackage[preprint]{main}

\usepackage{hyperref}

\usepackage{titletoc}

\usepackage{microtype}
\usepackage{graphicx}
\usepackage{subcaption}
\usepackage{booktabs}

\usepackage{bm}

\usepackage{multirow}
\usepackage{threeparttable}
\usepackage{booktabs}
\usepackage{arydshln}
\usepackage{graphicx}
\makeatletter
\@ifundefined{Bbbk}{}{}
\makeatother

\usepackage{amsmath,amssymb}
\usepackage{makecell}
\usepackage{caption}
\usepackage{xcolor}
\usepackage{hyperref}

\usepackage{amsmath}
\usepackage{amssymb}
\usepackage{mathtools}
\usepackage{amsthm}
\usepackage{enumitem}

\usepackage[capitalize,noabbrev]{cleveref}

\theoremstyle{plain}
\newtheorem{theorem}{Theorem}[section]
\newtheorem{proposition}[theorem]{Proposition}

\theoremstyle{definition}

\theoremstyle{remark}
\newtheorem{remark}[theorem]{Remark}

\usepackage[textsize=tiny]{todonotes}
\usepackage{titletoc} 

\icmltitlerunning{Submission and Formatting Instructions for ICML 2026}

\begin{document}

\twocolumn[
  \icmltitle{SOON: Symmetric Orthogonal Operator Network \\for 
  Global Subseasonal-to-Seasonal Climate Forecasting}
 
  \icmlsetsymbol{equal}{*}

  \begin{icmlauthorlist}
    \icmlauthor{Ziyu Zhou}{hkustgz}
    \icmlauthor{Tian Zhou}{DAMO}
    \icmlauthor{Shiyu Wang}{}
    \icmlauthor{James Kwok}{hkust}
    \icmlauthor{Yuxuan Liang}{hkustgz}
  \end{icmlauthorlist}

  \icmlaffiliation{hkustgz}{The Hong Kong University of Science and Technology (Guangzhou), Guangzhou, China}
  \icmlaffiliation{DAMO}{DAMO Academy, Alibaba Group, Hangzhou, China}
  \icmlaffiliation{hkust}{The Hong Kong University of Science and Technology, Hong Kong SAR, China}
  \icmlcorrespondingauthor{Yuxuan Liang}{yuxliang@outlook.com}

  \vskip 0.3in
]



\printAffiliationsAndNotice{}  

\begin{abstract}
Accurate global Subseasonal-to-Seasonal (S2S) climate forecasting is critical for disaster preparedness and resource management, yet it remains challenging due to chaotic atmospheric dynamics. Existing models predominantly treat atmospheric fields as isotropic images, conflating the distinct physical processes of zonal wave propagation and meridional transport, and leading to suboptimal modeling of anisotropic dynamics.
In this paper, we propose the Symmetric Orthogonal Operator Network (SOON) for global S2S climate forecasting. It couples: (1) an Anisotropic Embedding strategy that tokenizes the global grid into latitudinal rings, preserving the integrity of zonal periodic structures; and (2) a stack of SOON Blocks that models the alternating interaction of Zonal and Meridional Operators via a symmetric decomposition, structurally mitigating discretization errors inherent in long-term integration. Extensive experiments on the Earth Reanalysis 5 dataset demonstrate that SOON establishes a new state-of-the-art, significantly outperforming existing methods in both forecasting accuracy and computational efficiency. 
\end{abstract}

\section{Introduction}
\label{sec:intro}

Subseasonal-to-Seasonal (S2S) climate forecasting, predicting atmospheric conditions from 2 to 6 weeks in advance, is vital for decision-making in energy scheduling, agriculture, and disaster management~\citep{vitart2017subseasonal, white2017potential,Cirt}. However, forecasting on this timescale remains a grand challenge as it occupies the gap between initial-condition-dependent weather prediction and boundary-condition-dependent climate projection, where neither signal alone provides sufficient predictability~\citep{mariotti2018progress}. Although data-driven models have advanced medium-range (up to 2 weeks) forecasting~\citep{bi2023accurate,lam2023learning,gencast}, their direct extensions to S2S protocols typically suffer from rapid error accumulation~\citep{chaosbench,telepit}. This performance gap highlights a fundamental mismatch between the isotropic inductive biases of standard architectures and the anisotropic nature of S2S dynamics.

\begin{figure}[t]
    \centering
    \includegraphics[width=\linewidth, trim=13 21 22 21, clip]{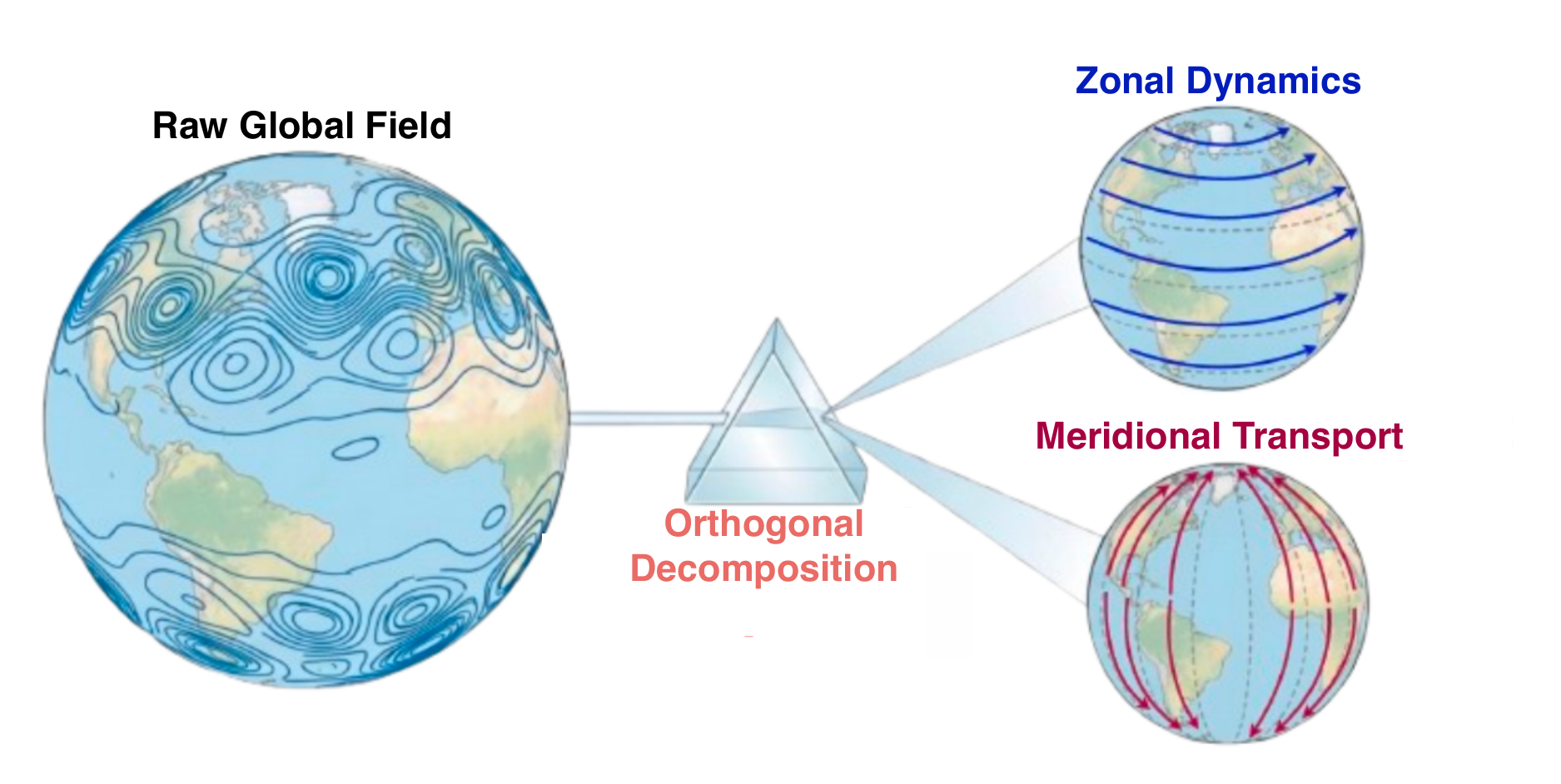}
    \vspace{-6mm}
    \caption{The anisotropic nature of global atmospheric dynamics. Global atmospheric circulation is governed by the coupling of two orthogonal physical processes: Zonal Dynamics, dominated by periodic wave propagation (e.g., planetary Rossby waves), and Meridional Transport, driven by the north-south exchange of heat and momentum (e.g., eddy heat fluxes).}
    \label{fig:concept}
    \vspace{-6mm}
\end{figure}

Recently, data-driven models have achieved strong performance in global S2S climate forecasting \citep{nguyen2025omnicast}. Efforts have adapted to the spherical grid structure via geometry-inspired circular patching strategies~\citep{Cirt} or employed equivariant message passing on spherical meshes to handle vector field symmetries~\citep{EIMP}. Additionally, spectral approaches have been explored to integrate teleconnection modeling with spherical harmonic representations~\citep{telepit} and incorporate Fourier-based embeddings for enhancing wave pattern representations~\citep{TianQuans2s}. Collectively, these works have propelled S2S forecasting from empirical statistical models toward end-to-end neural modeling, thereby introducing essential inductive biases for spherical geometry and global dependencies \citep{FuXi-S2S,zheng2025mesh}.

A fundamental limitation of existing S2S methods is their reliance on isotropic image modeling, which treats global weather fields as generic 2D images where latitudinal and longitudinal dimensions are processed uniformly. 
From a physical perspective, as illustrated in Figure~\ref{fig:concept}, global atmospheric circulation is naturally decomposed into two coupled but distinct mechanisms: zonal periodic wave propagation~\citep{rossby1939relation,hoskins1981steady} and meridional transport processes~\citep{oort1992physics,vallis2017atmospheric}. By treating these dimensions uniformly, standard baselines
implicitly conflate these distinct physical roles, resulting in insufficient modeling of the true atmospheric dynamics. Crucially, this isotropic assumption also neglects the severe projection distortion inherent in equirectangular grids, causing standard architectures to struggle with the rapid phase variations of planetary waves, particularly in high-latitude regions. Specifically, 
existing S2S forecasting models are limited by isotropic modeling assumptions, and so
standard 2D patching~\citep{TianQuans2s}, circular patching~\citep{Cirt}, and spherical harmonic aggregation~\citep{telepit} fail to
explicitly decouple propagation from transport, weakening the representation of complex wave-flow interactions and leading to suboptimal forecasting performance.

Instead of relying on isotropic modeling, we propose the \textbf{S}ymmetric \textbf{O}rthogonal \textbf{O}perator \textbf{N}etwork \textbf{(SOON)}, which is designed based on the principles of physical process decoupling and symmetric operator decomposition for effective global S2S climate forecasting. SOON first employs an \textbf{Anisotropic Embedding} strategy that tokenizes the grid into latitudinal rings, preserving the integrity of zonal periodic structures while structurally distinguishing them from meridional representations. The core \textbf{SOON Blocks} explicitly model the alternating interaction of these processes via a symmetric decomposition, with a \textbf{Zonal Operator} capturing global spectral wave dynamics and a \textbf{Meridional Operator} modeling local spatial transport. The operators are designed to induce a geometrically orthogonal 
subspace structure in the representation, thereby aligning the model more naturally with the anisotropic nature of global atmospheric dynamics. Crucially, each SOON block introduces structural symmetry to mitigate the accumulation of discretization errors over the long S2S forecasting horizon. We further adopt a phase-stable normalization scheme tailored for extended forecast windows. Finally, the encoded representation is passed through a lightweight MLP-based \textbf{Decoder} that enables 
forecasts on all atmospheric variables, after which the entire model is trained end-to-end. To our knowledge, SOON is the first S2S forecasting framework that explicitly formulates zonal–meridional decoupling as a symmetric operator-splitting problem in the latent space.

Our main contributions are as follows:
\begin{itemize}[leftmargin=*]
    \item We identify that prevalent S2S forecasting models fundamentally violate the anisotropic nature of atmospheric dynamics through isotropic modeling, providing a unified perspective on their limitations regarding latitude-longitude
    decoupling and operator decomposition.
    \item We propose SOON for S2S climate forecasting, which integrates Anisotropic Embedding with a symmetric composition of Zonal and Meridional Operators that is theoretically and structurally aligned with long-term atmospheric dynamics at the S2S timescale.
    \item Extensive experiments on the Earth Reanalysis 5 (ERA5) dataset demonstrate that SOON establishes a new state-of-the-art, outperforming existing methods in both forecasting accuracy and computational efficiency.
\end{itemize}
\begin{figure*}[t]
\centering
\includegraphics[width=0.9\linewidth]{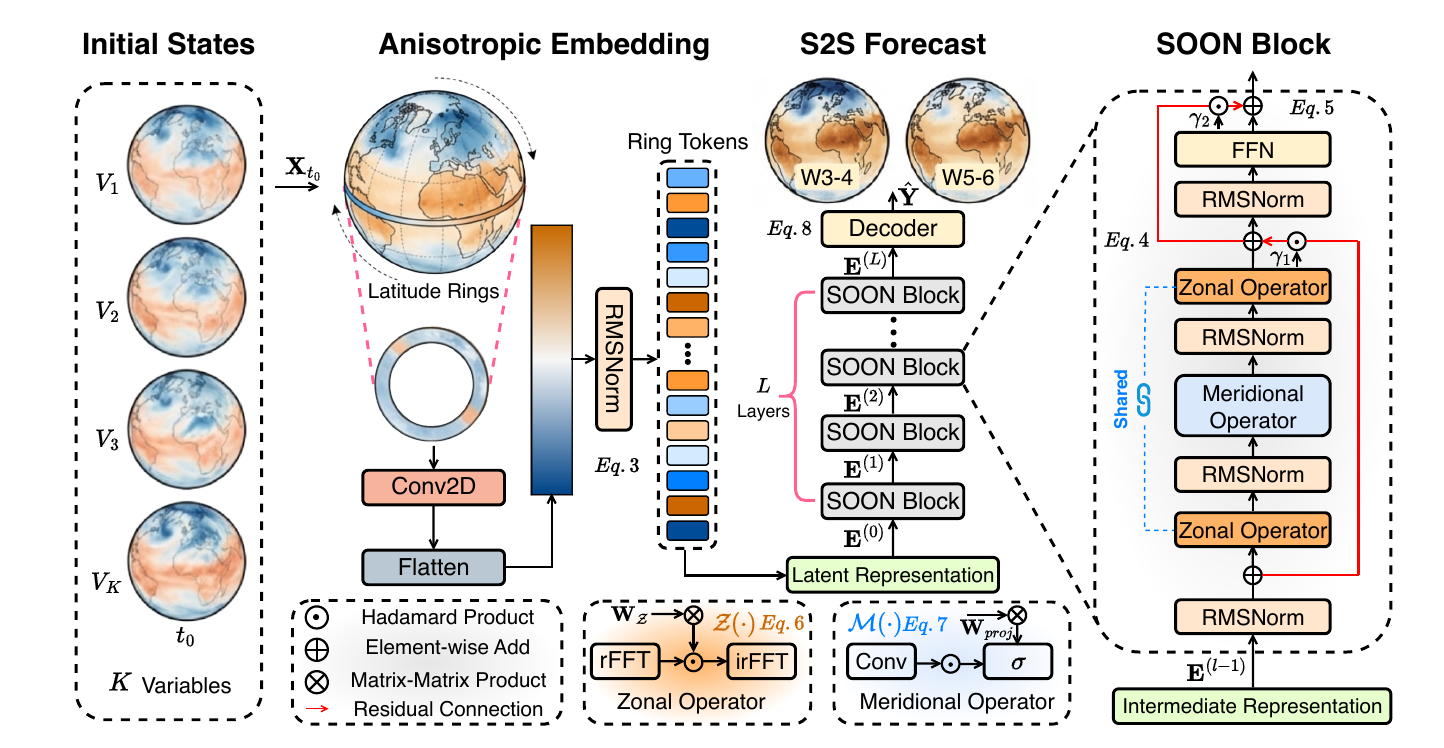}
\vspace{-3mm}
\caption{The main architecture of our proposed SOON. The Anisotropic Embedding projects atmospheric initial states into latitudinal ring tokens by compressing the longitudinal dimension into feature channels. These tokens are then evolved through a backbone of several stacked SOON Blocks. Specifically, each block sequentially applies two weight-shared Zonal Operators and a Meridional Operator to explicitly decouple wave propagation and transport dynamics. Finally, a Decoder maps the evolved features to bi-weekly forecast fields.}
\label{fig:framework}
\vspace{-4mm}
\end{figure*}

\vspace{-4mm}
\section{Preliminaries}
\label{sec:preliminaries}

\vspace{-2mm}
\subsection{Problem Definition}
\label{subsec:problem_definition}

\vspace{-2mm}
We consider global bi-weekly forecasting of $K$ meteorological variables defined on a latitude--longitude grid. Let $\Omega = [-90^{\circ}, 90^{\circ}] \times [-180^{\circ}, 180^{\circ}]$ be the spherical coordinate domain, discretized into $H$ latitude bands and $W$ longitude points.
We represent the spatial coordinates by $\bm{\mathcal{P}} \in \mathbb{R}^{H \times W \times 2}$ with
$\bm{\mathcal{P}}_{h,w,:} = (\varphi_h,\lambda_w) \in \Omega$,
where $\varphi_h$ and $\lambda_w$ denote latitude and longitude, respectively.
At time index $t$ (in days), the global atmospheric state is a tensor $\bm{X}_t \in \mathbb{R}^{K \times H \times W }$. 
Following common S2S benchmarking protocols~\citep{mouatadid2023adaptive,chaosbench}, given an initial condition $(\bm{\mathcal{P}}, \bm{X}_{t_0})$, our goal is to learn a predictor $\mathcal{F}_{\Theta}$ that directly outputs the two verification targets for weeks 3--4 and weeks 5--6:
\begin{equation}
\label{eq:problem_def}
(\hat{\bm{Y}}_{15:28},\, \hat{\bm{Y}}_{29:42})
=
\mathcal{F}_{\Theta}(\bm{\mathcal{P}}, \bm{X}_{t_0}),
\end{equation}
where $\hat{\bm{Y}}_{15:28} \in \mathbb{R}^{K \times H \times W}$ and $\hat{\bm{Y}}_{29:42} \in \mathbb{R}^{K \times H \times W }$ denote the predicted averages over days $15$--$28$ (weeks 3--4) and days $29$--$42$ (weeks 5--6) relative to the initialization day $t_0$, respectively. Our setting focuses on directly estimating these two evaluation-aligned bi-weekly targets from the initialization, in alignment with prior S2S forecasting works~\citep{telepit,Cirt,EIMP,FuXi-S2S,mouatadid2023adaptive}. A summary of the existing work on weather forecasting and S2S climate forecasting can be found in Appendix~\ref{sec:related_works}.

\vspace{-2mm}

\subsection{Operator Splitting for Non-commutative Dynamics}
\label{subsec:operator_splitting}

\vspace{-2mm}
Atmospheric evolution can be abstracted as a dynamical system 
defined by the differential equation
$\frac{\partial \mathbf{u}}{\partial t} \equiv\mathcal{F}(\mathbf{u})=\mathcal{A}(\mathbf{u})+\mathcal{B}(\mathbf{u})$,
where 
$\mathbf{u}(t)$
is the global state vector,
$\mathcal{A}$ and $\mathcal{B}$ represent different physical processes (e.g., zonal wave dynamics and meridional transport)~\citep{rossby1939relation,hoskins1981steady,oort1992physics}.

For easier exposition, we use the linear notation $\mathcal{F}(\mathbf{u})=(\mathcal{A}+\mathcal{B})\mathbf{u}$, in which the exact time-$\tau$ flow can be written as
$\mathbf{u}(t+\tau)=\exp(\tau(\mathcal{A}+\mathcal{B}))\,\mathbf{u}(t)$.
When the sub-operators do not commute ($[\mathcal{A},\mathcal{B}]\neq 0$), this flow cannot be factorized into separate evolutions, motivating operator splitting methods~\citep{trotter1959product,strang1968construction,mclachlan2002splitting}.
Operator splitting approximates the coupled evolution by alternating simpler sub-evolutions governed by $\mathcal{A}$ and $\mathcal{B}$.
Different splitting orders correspond to different discretizations of the same underlying dynamics and lead to distinct error behaviors.
In long-horizon forecasting, these error properties matter because local approximation errors can accumulate under repeated composition. 

In the following,
we introduce two representative operator splitting methods.

\vspace{-2mm}
\paragraph{Lie--Trotter splitting~\citep{trotter1959product}.}
The first-order decomposition applies the two subflows sequentially, i.e., $\mathcal{S}_{\text{Lie}}(\tau)=\exp(\tau\mathcal{B})\exp(\tau\mathcal{A})$.
By the Baker--Campbell--Hausdorff (BCH) formula~\citep{BakerAlternantsAC}, its local truncation error is $\mathcal{O}(\tau^2)$ and is controlled by the commutator $[\mathcal{A},\mathcal{B}]$~\citep{blanes2025concise}.

\vspace{-2mm}

\paragraph{Strang splitting~\citep{strang1968construction}.}
A symmetric second-order decomposition enforces time-reversal symmetry via
\begin{equation}
\label{eq:strang}
\mathcal{S}_{\mathrm{Strang}}(\tau)
=
\exp\!\left(\tfrac{\tau}{2}\mathcal{A}\right)
\exp(\tau\mathcal{B})
\exp\!\left(\tfrac{\tau}{2}\mathcal{A}\right),
\end{equation}
which cancels the leading second-order BCH terms and yields
an $\mathcal{O}(\tau^3)$ local error 
(for a sufficiently small time step $\tau$) involving nested commutators~\citep{mclachlan2002splitting}.
In practice, 
symmetric decomposition
is often preferred when modeling non-commutative
systems, as it typically reduces systematic bias and slows down error accumulation compared with sequential decomposition~\citep{hairer2006geometric,blanes2025concise}.
\vspace{-2mm}

\section{Methodology}
\label{sec:methodology}
\vspace{-2mm}

Figure~\ref{fig:framework} illustrates the overall architecture of SOON, which is designed to align neural operations with the anisotropic nature of atmospheric dynamics. The framework is composed of three strategic modules: (i) \textbf{Anisotropic Embedding}, which projects the global grid into a sequence of latitudinal ring tokens by compressing longitudinal spatial information into feature channels; (ii) A stack of \textbf{Symmetric Operator Blocks} (SOON Blocks), which function as high-order neural solvers via a symmetric Strang splitting scheme, explicitly employing a \textbf{Zonal Operator} to capture global spectral wave dynamics and a \textbf{Meridional Operator} to model local spatial transport; (iii) A lightweight \textbf{Decoder}, which projects the evolved latent representations back to the physical space to generate the final bi-weekly forecasts. The entire architecture is trained end-to-end to minimize the prediction error over the S2S timescale.

\vspace{-2mm}

\subsection{Anisotropic Embedding}
\label{subsec:embedding}
\vspace{-2mm}
Recent data-driven S2S models adopt various patch-based embedding strategies to tokenize global gridded fields, such as circular patching~\citep{Cirt}, spherical harmonic embedding~\citep{telepit}, and Fourier embedding~\citep{TianQuans2s}. However, 
as in conventional approaches~\citep{pathak2022fourcastnet,FuXi-S2S}, they 
still treat the atmosphere as an isotropic image, which can disrupt continuous zonal wave structures that are essential for planetary-scale dynamics. This limitation motivates us to design a tokenization strategy that preserves zonal periodicity while decoupling it from the meridional transport processes.

To this end, we introduce Anisotropic Embedding. Instead of extracting local patches, we project the entire latitudinal ring at each latitude index $h$ into a single high-dimensional latent token. Formally, given an input state $\bm{X}_{t_0} \in \mathbb{R}^{K \times H \times W}$, we apply a 2D convolution,
operating over the spatial latitude ($H$) and longitude ($W$) dimensions, with a kernel size $(1, W)$ to compress the longitudinal dimension, followed by flattening and normalization:
\begin{equation}
\label{eq:embedding}
\bm{E}^{(0)} = \mathrm{RMSN}\left( \mathrm{Flat}\left( \mathrm{Conv}_{1 \times W}(\bm{X}_{t_0}) \right) \right) \in \mathbb{R}^{H \times C},
\end{equation}
where $\mathrm{Conv}_{1 \times W}$ denotes the convolution operation mapping $K$ input variables to $C$ fixed feature channels. $\mathrm{Flat}(\cdot)$ represents the flattening operation that reshapes the tensor spatial dimension into a sequence of length $H$, and $\mathrm{RMSN}(\cdot)$ denotes Root Mean Square Normalization (RMSNorm) \citep{zhang2019rootmeansquarelayer}, which stabilizes the latent features without recentering. Consequently, the sequence dimension $H$ of $\bm{E}^{(0)}$ strictly corresponds to the physical meridional domain, while the feature dimension $C$ encodes the latent zonal spectral information.
Unlike local patching
that fragments the grid~\citep{pathak2022fourcastnet, TianQuans2s}, this global aggregation treats the full longitudinal span as a unified entity, thereby implicitly retaining the intrinsic cyclic topology of atmospheric waves within the latent feature space.


\vspace{-2mm}

\subsection{SOON Block}
\label{subsec:SOON_block}
\vspace{-2mm}
After Anisotropic Embedding, the latent representation $\bm{E}^{(0)}$ is processed through a backbone of $L$ stacked SOON blocks. Each block functions as a neural operator solver with
alternating spectral and spatial operations that 
advance the atmospheric state by a discrete time step $\tau$.
We design the internal structure of the block to mimic the symmetric Strang splitting scheme defined in Eq.~\eqref{eq:strang} ($\mathcal{S}_{\mathrm{Strang}}(\tau) = e^{\frac{\tau}{2}\mathcal{A}} e^{\tau\mathcal{B}} e^{\frac{\tau}{2}\mathcal{A}}$). This aims to minimize the discretization error accumulated over long forecast horizons due to non-commutativity of atmospheric operators~\citep{mclachlan2002splitting,hairer2006geometric}.

Formally, let $\bm{E}^{(l-1)} \in \mathbb{R}^{H \times C}$ be the input to the $l$-th block. We define two operators:
(i) Zonal Operator $\mathcal{Z}(\cdot)$, and (ii) Meridional Operator $\mathcal{M}(\cdot)$, that correspond to $\mathcal{A}$ and $\mathcal{B}$ in Eq.~\eqref{eq:strang}, respectively. 
Details of these operators will be introduced later in this section. The block updates the representation via a symmetric composition path followed by a Feedforward Network ($\mathrm{FFN(\cdot)}$) with residual connection:
\begin{align}
    \bm{E}' \!&= \!\bm{E}^{(l-1)} \!+ \!\bm{\gamma}_1 \!\odot\! \left( \mathcal{Z} \! \circ \! \mathcal{M} \!\circ\! \mathcal{Z} \right)\! \left(\!\mathrm{RMSN}(\bm{E}^{(l-1)}) \!\right), \label{eq:symmetric_mixing} \\
    \bm{E}^{(l)} \!&= \!\bm{E}'\! +\! \bm{\gamma}_2 \!\odot\! \mathrm{FFN}(\mathrm{RMSN}(\bm{E}')),
\end{align}
where $\circ$ denotes operator composition and $\odot$ is the  element-wise Hadamard product. The parameters $\bm{\gamma}_1, \bm{\gamma}_2 \in \mathbb{R}^C$ correspond to learnable LayerScale~\citep{layerscale}
vectors that adaptively modulate the signal amplitude in residual branches to facilitate stable training. In Eq.~\eqref{eq:symmetric_mixing}, the Zonal Operator $\mathcal{Z}$ is applied twice with weight sharing. 
With this structural symmetry constraint,
the SOON block satisfies time-reversal symmetry~\citep{hairer2006geometric,blanes2025concise},
which automatically eliminates all even-order error terms (e.g., $\mathcal{O}(\tau^2)$) in the splitting approximation, 
thereby directly addressing the error accumulation challenge in long-term forecasting.

The following Theorem provides a theoretical guarantee on error control.

\begin{theorem}[\textbf{Local Truncation Error of SOON Block}]
\label{theorem:1}
Let $\mathcal{L}_{\mathcal{Z}}$ and $\mathcal{L}_{\mathcal{M}}$ be the infinitesimal generators corresponding to the Zonal and Meridional operators, respectively.
Let $\mathcal{S}_{\mathrm{SOON}}(\tau)$ be the evolution operator of one SOON block approximating the exact dynamics $e^{\tau(\mathcal{L}_{\mathcal{Z}} + \mathcal{L}_{\mathcal{M}})}$. By enforcing the symmetric composition
$\mathcal{Z} \circ \mathcal{M} \circ \mathcal{Z}$, the local truncation error is of third order, i.e., $\| \mathcal{S}_{\mathrm{SOON}}(\tau) - e^{\tau(\mathcal{L}_{\mathcal{Z}} + \mathcal{L}_{\mathcal{M}})} \| = \mathcal{O}(\tau^3)$, which is strictly superior to the $\mathcal{O}(\tau^2)$ error of standard sequential stacking based on Lie-Trotter splitting.
\end{theorem}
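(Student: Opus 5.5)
The plan is to model one SOON block in the idealized linear setting of Section~\ref{subsec:operator_splitting}. Each application of the weight-shared Zonal Operator is identified with the half-step subflow $e^{\frac{\tau}{2}\mathcal{L}_{\mathcal{Z}}}$, and the Meridional Operator with the full-step subflow $e^{\tau\mathcal{L}_{\mathcal{M}}}$. Then $\mathcal{S}_{\mathrm{SOON}}(\tau)=e^{\frac{\tau}{2}\mathcal{L}_{\mathcal{Z}}}e^{\tau\mathcal{L}_{\mathcal{M}}}e^{\frac{\tau}{2}\mathcal{L}_{\mathcal{Z}}}$, which is exactly the Strang form of Eq.~\eqref{eq:strang} with $\mathcal{A}=\mathcal{L}_{\mathcal{Z}}$ and $\mathcal{B}=\mathcal{L}_{\mathcal{M}}$. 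I would state this identification explicitly as a modeling assumption. Under it, the residual/LayerScale wrapper and the FFN are treated either as absorbed into the generators or as contributing only at the order of the same expansion. I would also assume the generators are bounded, or at least that the relevant Taylor remainders are uniformly bounded for small $\tau$.

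First, I would expand both sides in powers of $\tau$. The exact flow gives
\[
e^{\tau(\mathcal{A}+\mathcal{B})}=I+\tau(\mathcal{A}+\mathcal{B})+\tfrac{\tau^2}{2}(\mathcal{A}+\mathcal{B})^2+\mathcal{O}(\tau^3),
\]
and $(\mathcal{A}+\mathcal{B})^2=\mathcal{A}^2+\mathcal{A}\mathcal{B}+\mathcal{B}\mathcal{A}+\mathcal{B}^2$. Multiplying out the three exponential factors of the Strang product up to second order gives the $\tau^2$ coefficient
\[
\tfrac{1}{8}\mathcal{A}^2+\tfrac{1}{8}\mathcal{A}^2+\tfrac14\mathcal{A}^2+\tfrac12\mathcal{B}^2+\tfrac12\mathcal{A}\mathcal{B}+\tfrac12\mathcal{B}\mathcal{A}=\tfrac12(\mathcal{A}+\mathcal{B})^2 .
\]
The zeroth- and first-order terms match trivially. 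The difference is therefore $\mathcal{O}(\tau^3)$, and bounded generators give an explicit remainder bound via the integral form of Taylor's theorem.

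As a cleaner alternative I would mention the symmetric BCH formula:
\[
\mathcal{S}_{\mathrm{Strang}}(\tau)=\exp\!\big(\tau(\mathcal{A}+\mathcal{B})+\tau^3\big(\tfrac{1}{12}[\mathcal{B},[\mathcal{B},\mathcal{A}]]-\tfrac{1}{24}[\mathcal{A},[\mathcal{A},\mathcal{B}]]\big)+\mathcal{O}(\tau^5)\big).
\]
This formula exhibits the vanishing even-order terms claimed in the text. It follows from time-reversibility: $\mathcal{S}(\tau)\mathcal{S}(-\tau)=I$, so the modified generator is odd in $\tau$.

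For the comparison, I would expand the Lie--Trotter product $e^{\tau\mathcal{B}}e^{\tau\mathcal{A}}$. Its $\tau^2$ coefficient is $\tfrac12\mathcal{A}^2+\tfrac12\mathcal{B}^2+\mathcal{B}\mathcal{A}$, so it differs from the exact flow by $\tfrac{\tau^2}{2}[\mathcal{B},\mathcal{A}]$. This term is nonzero whenever the operators do not commute, which establishes the strict gap between $\mathcal{O}(\tau^2)$ and $\mathcal{O}(\tau^3)$.

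The main obstacle is the first step, namely justifying that the learned nonlinear block genuinely realizes the exact Strang composition. The operators $\mathcal{Z}$ and $\mathcal{M}$ are neural maps rather than exponentials, and the composition sits inside a residual branch. I would handle this honestly by proving the theorem for the idealized linear splitting as the paper's framing suggests. I would then remark that the nonlinear case follows by the same argument using Lie derivatives/vector-field BCH. In that setting the symmetry $\mathcal{Z}\circ\mathcal{M}\circ\mathcal{Z}$ with shared weights still makes the block self-adjoint, which again kills the second-order term, provided each sub-map agrees to second order with its half- or full-step flow.
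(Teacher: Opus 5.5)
Your proposal is correct. It rests on the same modeling step as the paper: one block is identified with the Strang product $e^{\frac{\tau}{2}\mathcal{L}_{\mathcal{Z}}}e^{\tau\mathcal{L}_{\mathcal{M}}}e^{\frac{\tau}{2}\mathcal{L}_{\mathcal{Z}}}$. The paper simply asserts this identification, whereas you state it openly as an assumption.

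Where you differ is the key step.
\begin{itemize}
\item The paper compares logarithms. It applies the symmetric BCH formula to get $\ln \mathcal{S}_{\mathrm{SOON}}(\tau) - \tau(\mathcal{L}_{\mathcal{Z}}+\mathcal{L}_{\mathcal{M}}) = \mathcal{O}(\tau^3)$. It then transfers this to the operators through a Lipschitz bound $\|e^A-e^B\|\le C\|A-B\|$.
\item Your main route multiplies out the three exponential series and checks directly that the $\tau^2$ coefficient equals $\tfrac12(\mathcal{A}+\mathcal{B})^2$. Your bookkeeping is right: $\tfrac18+\tfrac18+\tfrac14$ for $\mathcal{A}^2$, and $\tfrac12,\tfrac12$ for the mixed terms.
\end{itemize}

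Your version is more elementary and self-contained. For bounded generators it needs neither existence of the logarithm, nor convergence of BCH, nor the exponential Lipschitz estimate. The paper justifies that estimate only loosely, and it is delicate for unbounded generators. Your route also yields an explicit Taylor remainder. The BCH route buys structural information: the leading error term as nested commutators, and the oddness of the modified generator in $\tau$. You recover the latter in your alternative remark via $\mathcal{S}(\tau)\mathcal{S}(-\tau)=I$. Your Lie--Trotter expansion reproduces the paper's $\tfrac{\tau^2}{2}[\mathcal{L}_{\mathcal{M}},\mathcal{L}_{\mathcal{Z}}]$ term.

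Your symmetric BCH coefficients, $\tfrac{1}{12}[\mathcal{B},[\mathcal{B},\mathcal{A}]]-\tfrac{1}{24}[\mathcal{A},[\mathcal{A},\mathcal{B}]]$ with $\mathcal{A}$ the half-stepped operator, are the correct ones. The paper's displayed formula swaps the $\tfrac{1}{24}$ and $\tfrac{1}{12}$, which does not affect the order claim.

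One small caution on your nonlinear remark. Weight sharing alone does not make $\mathcal{Z}\circ\mathcal{M}\circ\mathcal{Z}$ self-adjoint; that would require each sub-map to satisfy $\Phi_{-h}=\Phi_h^{-1}$. However, your proviso that each sub-map matches its subflow up to $\mathcal{O}(\tau^3)$ already gives the $\mathcal{O}(\tau^3)$ local error by direct perturbation of the Strang product. So the self-adjointness claim is not needed there.
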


\vspace{-4mm}

\begin{proof}
Based on the symmetric Baker-Campbell-Hausdorff (BCH) \citep{BakerAlternantsAC} formula, the symmetric structure strictly cancels the second-order commutator error terms ($[\mathcal{L}, \mathcal{K}]$) inherent in sequential splitting. See Appendix \ref{proof:theorem1} for the complete derivation.
\end{proof}

\vspace{-3mm}
While Layer Normalization (LayerNorm) \citep{ba2016layer} has been widely adopted in prior S2S forecasting architectures~\citep{Cirt,EIMP,telepit}, the following Proposition identifies that its mean-centering operation physically disrupts energy conservation of atmospheric waves. 

\begin{proposition}
\label{prop:layernorm}
Let
$\mathbf{z} \in \mathbb{R}^C$ be the latent representation of one of the latitudinal ring tokens 
in $\bm{E}^{(0)}$ 
from
Eq.~\eqref{eq:embedding},
and let $\hat{\mathbf{z}} = \mathcal{F}(\mathbf{z})$ be its discrete Fourier spectrum, 
where $\mathcal{F}$ denotes the Discrete Fourier Transform (DFT).
LayerNorm introduces spectral distortion by 
eliminating the zero-frequency component (background energy), i.e., 
$|\mathcal{F}(\text{LayerNorm}(\mathbf{z}))_0| \equiv 0$, where $\mathcal{F}(\cdot)_0$ represents the $0$-th frequency component of the DFT. 
\end{proposition}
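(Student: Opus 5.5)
The plan is to compute the zero-frequency DFT coefficient directly and check that it is the sum of the entries. With the DFT taken along the channel dimension of $\mathbf{z}\in\mathbb{R}^C$, the coefficients are $\hat{\mathbf{z}}_k=\sum_{c=0}^{C-1} z_c\, e^{-2\pi i kc/C}$. At $k=0$ every exponential equals $1$, so
\[
\mathcal{F}(\mathbf{v})_0=\sum_{c=0}^{C-1} v_c = C\,\bar{v}
\]
for any vector $\mathbf{v}$. The whole statement therefore reduces to showing that the output of LayerNorm has zero mean over its channels.

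Next I write out LayerNorm explicitly. Let $\mu=\frac1C\sum_c z_c$ and $\sigma^2=\frac1C\sum_c (z_c-\mu)^2$. Then
\[
\mathrm{LayerNorm}(\mathbf{z})_c=\frac{z_c-\mu}{\sqrt{\sigma^2+\epsilon}} .
\]
Summing over $c$ gives $\sum_c (z_c-\mu)=C\mu-C\mu=0$, and the positive scalar denominator does not change this. Hence $\mathcal{F}(\mathrm{LayerNorm}(\mathbf{z}))_0=0$ identically, for every input $\mathbf{z}$. This is what the "$\equiv$" in the statement asserts.

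For contrast, I compute the same coefficient under RMSNorm. Since $\mathrm{RMSN}(\mathbf{z})=\mathbf{z}/\sqrt{\frac1C\|\mathbf{z}\|^2+\epsilon}$, its zero-frequency coefficient is $C\mu$ divided by a positive scalar. So the background component is preserved up to a scale factor and is not erased.

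The one real obstacle is the affine part of LayerNorm. With a learnable gain $\bm{g}$ and bias $\bm{\beta}$, the output becomes $\bm{g}\odot\hat{\mathbf{z}}+\bm{\beta}$. Its zero-frequency coefficient is $\sum_c g_c\hat z_c+\sum_c\beta_c$, which need not vanish. I would therefore state explicitly that the proposition refers to the normalization map itself, i.e.\ the standardization step with $\bm{g}=\mathbf{1}$ and $\bm{\beta}=\mathbf{0}$. I would add a remark that the affine parameters are input-independent. They can reinject only a fixed, data-independent offset, so the input's own zero-frequency (background-energy) content $C\mu$ is still annihilated, and the information in $\hat{\mathbf{z}}_0$ is not recoverable from the output.
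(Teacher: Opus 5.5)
Your proof is correct and follows the paper's argument: the $0$-th DFT coefficient is the sum of the entries, mean-centering makes that sum vanish, and division by a positive scalar (your $\sqrt{\sigma^2+\epsilon}$, the paper's $\sigma$) leaves it zero. Your caveat about the affine parameters is a genuine refinement that the paper's proof silently omits, since with a non-uniform gain or a bias with $\sum_c\beta_c\neq 0$ the $0$-th coefficient $\sum_c g_c\tilde z_c+\sum_c\beta_c$ (writing $\tilde{\mathbf z}$ for the standardized vector, to avoid clashing with the DFT notation $\hat{\mathbf z}$) need not vanish; one correction is that this term is data-dependent through $\tilde{\mathbf z}$ and is not a fixed offset, although your conclusion that $\mu$ itself cannot be recovered is right, because $\tilde{\mathbf z}$ is invariant under $\mathbf z\mapsto\mathbf z+a\mathbf 1$.
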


\vspace{-2mm}

In contrast, the following Proposition shows that RMSNorm acts as a uniform spectral scaler. This is critical for S2S forecasting because phase information encodes the spatial positioning and structural topology of planetary waves, which must be preserved to prevent trajectory divergence over long horizons.
Accordingly, SOON consistently uses RMSNorm instead of LayerNorm to stabilize training.

\begin{proposition}[\textbf{Spectral Fidelity of RMSNorm}]
\label{prop:rmsnorm}
RMSNorm preserves the phase angles $\arg(\mathcal{F}(\text{RMSN}(\mathbf{z}))_k) = \arg(\hat{\mathbf{z}}_k)$ for all wave numbers $k$.
\end{proposition}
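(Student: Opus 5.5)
The plan is to write RMSNorm explicitly and then use linearity of the DFT. For $\mathbf{z}\in\mathbb{R}^C$, RMSNorm is
\[
\mathrm{RMSN}(\mathbf{z}) = \frac{\mathbf{z}}{r(\mathbf{z})}\odot \mathbf{g},\qquad r(\mathbf{z})=\sqrt{\tfrac{1}{C}\textstyle\sum_{c=1}^{C} z_c^2+\epsilon}.
\]
Here $\epsilon>0$ is the usual stability constant and $\mathbf{g}\in\mathbb{R}^C$ is the learnable gain. The key observation is that $r(\mathbf{z})$ is a \emph{scalar}, strictly positive because $\epsilon>0$, and it depends on $\mathbf{z}$ only through its norm. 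Crucially, unlike LayerNorm in Proposition~\ref{prop:layernorm}, no mean is subtracted.

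I would first treat the gain as uniform, i.e.\ $\mathbf{g}=g\mathbf{1}$ with $g>0$. This is the setting in which RMSNorm is the ``uniform spectral scaler'' described in the text, and it covers the parameter-free case $g=1$. Then $\mathrm{RMSN}(\mathbf{z})=\alpha\,\mathbf{z}$ with $\alpha=g/r(\mathbf{z})>0$. Since the DFT is linear, $\mathcal{F}(\mathrm{RMSN}(\mathbf{z}))_k=\alpha\,\hat{\mathbf{z}}_k$ for every $k$. Multiplying a complex number by a positive real does not change its argument, so $\arg(\mathcal{F}(\mathrm{RMSN}(\mathbf{z}))_k)=\arg(\hat{\mathbf{z}}_k)$. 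When $\hat{\mathbf{z}}_k=0$, both sides vanish and the phase is undefined on both sides, so the claim holds trivially under any convention. As a by-product, magnitudes scale uniformly, $|\mathcal{F}(\mathrm{RMSN}(\mathbf{z}))_k|=\alpha|\hat{\mathbf{z}}_k|$. In particular the zero-frequency component is kept, $\mathcal{F}(\mathrm{RMSN}(\mathbf{z}))_0=\alpha\sum_c z_c\neq 0$ whenever the mean is nonzero. This is the contrast with Proposition~\ref{prop:layernorm}.

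The main obstacle is a general, non-constant gain $\mathbf{g}$. Elementwise multiplication by $\mathbf{g}$ turns into circular convolution in frequency, $\mathcal{F}(\mathbf{g}\odot\mathbf{z})=\frac{1}{C}\,\hat{\mathbf{g}}\ast\hat{\mathbf{z}}$, and this generally mixes phases. So phase preservation cannot hold for arbitrary $\mathbf{g}$. I would therefore state the result for the normalization step proper: the scale $1/r(\mathbf{z})$, with the gain either uniform or regarded as part of the subsequent learned linear map. I would add a remark that the positivity of the scalar factor is the only property used. This also covers initialization, where $\mathbf{g}=\mathbf{1}$.
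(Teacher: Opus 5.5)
Your argument is correct and is essentially the paper's: RMSNorm multiplies $\mathbf{z}$ by a positive real scalar, so by linearity of the DFT every coefficient $\hat{\mathbf{z}}_k$ is scaled by that scalar and its argument is unchanged. The paper's detour through Parseval's identity, which rewrites the scale factor in spectral form, is not needed for the phase claim. Your extra care is well placed, because the paper silently takes $\mathrm{RMSN}(\mathbf{z})=\lambda\mathbf{z}$ with no $\epsilon$ and no learnable gain: your $\epsilon>0$ guarantees positivity even at $\mathbf{z}=0$, and your remark that a non-uniform gain mixes phases via circular convolution correctly marks where the proposition actually holds.
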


\vspace{-3mm}

\begin{proof}
See Appendix~\ref{appendix:proof_layernorm} and Appendix~\ref{appendix:proof_rmsnorm} for detailed proof of the propositions above based on the Parseval's theorem \citep{parseval1806memoire} and Fourier Transform.
\end{proof}
 
\vspace{-2mm}

In the following, we detail the implementations of the Zonal and Meridional Operators in each SOON block.

\paragraph{Zonal Operator $\mathcal{Z}$.} 
\label{ZonalOperator}
It is designed to model global wave propagation (e.g., Rossby waves) along the longitudinal direction. Since our anisotropic embedding compresses longitudinal information into the 
$C$
feature channels,
zonal dynamics are naturally captured by mixing information across the channel dimension. We implement this via a 
gating mechanism in the frequency domain:
\begin{equation}
    \mathcal{Z}(\bm{N}) = \mathrm{iFFT}\left( \bm{W}_{\mathcal{Z}} \odot \mathrm{FFT}(\bm{N}) \right),
\end{equation}
where $\bm{N} \in \mathbb{R}^{H\times C}$ is
the normalized input $\mathrm{RMSN}(\bm{E}^{(l-1)})$ in Eq.~\eqref{eq:symmetric_mixing},
$\mathrm{FFT}(\cdot)$ and $\mathrm{iFFT}(\cdot)$ denote the one-dimensional Fast Fourier Transform and its inverse applied along the feature dimension, respectively, and $\bm{W}_{\mathcal{Z}} \in \mathbb{C}^{C}$ is a learnable complex weight initialized near identity. By operating in the frequency domain, this operator captures global periodic dependencies with an effective global receptive field \citep{kafnet},
preserving phase information critical for wave dynamics while maintaining $O(C \log C)$ time
complexity.

\vspace{-2mm}
\paragraph{Meridional Operator
$\mathcal{M}$.}
\label{MeridionalOperator}
It targets the latitude dimension $H$, where atmospheric dynamics are dominated by local transport (i.e., advection) and thermodynamic mixing rather than global periodicity. We model this via a spatial Gated Linear Unit (GLU) convolution \citep{glu}:
\begin{equation}
    \mathcal{M}(\bm{N}) = \left( \mathrm{Conv}(\bm{N}) \odot \sigma(\mathrm{Conv}_{\mathrm{gate}}(\bm{N})) \right) \bm{W}_{\mathrm{proj}},
\end{equation}
where 
$\sigma(\cdot)$ is the sigmoid function,
$\mathrm{Conv}(\cdot)$ and $\mathrm{Conv}_{\mathrm{gate}}(\cdot)$ are
two distinct 1D convolution layers,
both utilizing a fixed 
kernel size $k$ sliding over the meridional sequence dimension $H$, and $\bm{W}_{\mathrm{proj}}$ is a learnable parameter. 
The element-wise product $\odot$ serves as a gating mechanism,
allowing the model to dynamically control the information flux, akin to the nonlinear advection terms in fluid dynamic equations~\citep{vallis2017atmospheric},
facilitating the learning of local meridional transport patterns.

\vspace{-2mm}

\subsection{Decoder}
\label{subsec:Decoder}
\vspace{-2mm}
After passing through $L$ SOON blocks, the latent representation $\bm{E}^{(L)} \in \mathbb{R}^{H \times C}$ contains the evolved atmospheric state in the compressed anisotropic space. The Decoder projects this representation back to the original physical grid to generate the final forecasts. We employ a linear projection layer followed by a tensor reshaping operation to reconstruct the longitudinal dimension $W$ from the 
$C$
feature channels:
\begin{equation}
    (\hat{\bm{Y}}_{15:28},\, \hat{\bm{Y}}_{29:42}) = \mathrm{Reshape}\left( \mathrm{Linear}(\bm{E}^{(L)}) \right),
\end{equation}
where the linear layer projects the feature dimension from $C$ to $2 \times K \times W$. 
Finally, we obtain two tensors, $\hat{\bm{Y}}_{15:28}$ and $\hat{\bm{Y}}_{29:42}$, each with dimension 
$\mathbb{R}^{K \times H \times W}$, representing the two target bi-weekly mean fields corresponding to weeks 3--4 and weeks 5--6, respectively.

\paragraph{Training Objective.}
The model is trained in an end-to-end manner, minimizing the discrepancy between the predicted bi-weekly means and the ground truth. Based on the target definition in Eq.~\eqref{eq:problem_def}, we define the overall objective $\mathcal{L}$ as the average squared error over both forecast windows:
\begin{equation}
\scalebox{0.92}{$
\mathcal{L} \!=\! \frac{1}{2 K H W}\! \left( 
\left\| \hat{\bm{Y}}_{15:28}\! - \!\bm{Y}_{15:28} \right\|_{\mathcal{L}_2}^2 
\!\!\!+\!\!\left\| \hat{\bm{Y}}_{29:42} \!-\! \bm{Y}_{29:42} \right\|_{\mathcal{L}_2}^2 
\right)\!,
$}
\end{equation}
where $\bm{Y}_{15:28}$ and $\bm{Y}_{29:42}$ denote the ground truth averaged fields for weeks 3--4 and weeks 5--6, respectively. To account for the spherical geometry of the Earth, the norm $\|\cdot\|_{\mathcal{L}_2}^2$ represents the latitude-weighted Mean Squared Error, where the error at each latitude $h$ is weighted by $w_h = \cos(\varphi_h)$ following \citep{Cirt}.

\vspace{-2mm}
\subsection{Computational Complexity Analysis}
\label{subsec:TimeComplexity}
\vspace{-2mm}
Standard Transformer-based models~\citep{dosovitskiy2020image,nguyen2023climax} incur a quadratic computational cost of $\mathcal{O}((HW)^2 C)$, scaling quadratically with the longitudinal resolution ($W^2$). In contrast, SOON achieves a complexity of $\mathcal{O}(H(C^2 + C \log C + Ck))$ (detailed derivations in Appendix~\ref{app:complexity}). Crucially, this complexity is independent of the longitudinal width $W$ due to our anisotropic embedding. This strictly linear scaling with latitude $H$ provides a significant efficiency advantage for high-resolution global forecasting compared to isotropic architectures.


\begin{table*}[t]
\centering
\footnotesize
\setlength{\tabcolsep}{1.8pt}
\renewcommand{\arraystretch}{1.1}
\caption{Global S2S climate forecasting results. The best result is in \textbf{bold}, and the second-best is \underline{underlined}. The lower RMSE and higher ACC indicate better performance. We report several key pressure-level variables ($z500$, $z850$, $t500$, and $t850$) and single-level variables ($t2m$, $u10$, and $v10$). N/A indicates that a baseline does not predict the corresponding variable(s).}
\vspace{-2mm}
\label{tab:mainresults}
\begin{threeparttable}
\newcommand{\best}[1]{\textbf{#1}}
\newcommand{\second}[1]{\underline{#1}}
\scalebox{0.72}{
\begin{tabular}{c cc | *{4}{c} | *{3}{c} | *{4}{c} | *{3}{c}}
\toprule
  & \multicolumn{2}{c|}{\multirow{3}{*}{\textbf{Model}}}
  & \multicolumn{7}{c|}{\textbf{RMSE} ($\downarrow$)}
  & \multicolumn{7}{c}{\textbf{ACC} ($\uparrow$)} \\
\cmidrule(lr){4-10}\cmidrule(lr){11-17}
  &  &
  & \multicolumn{4}{c|}{\textbf{Pressure Level}}
  & \multicolumn{3}{c|}{\textbf{Single Level}}
  & \multicolumn{4}{c|}{\textbf{Pressure Level}}
  & \multicolumn{3}{c}{\textbf{Single Level}} \\
\cmidrule(lr){4-7}\cmidrule(lr){8-10}\cmidrule(lr){11-14}\cmidrule(lr){15-17}
  &  &
  & $z500(gpm)$ & $z850(gpm)$ & $t500(K)$ & $t850(K)$ & $t2m(K)$ & $u10(m/s)$ & $v10(m/s)$
  & $z500(gpm)$ & $z850(gpm)$ & $t500(K)$ & $t850(K)$ & $t2m(K)$ & $u10(m/s)$ & $v10(m/s)$ \\
\midrule
\multirow{12}{*}{\rotatebox{90}{\textbf{Weeks 3-4 (Days 15-28)}}}
  & \multicolumn{2}{c|}{CMA}   & 69.833 & 47.056 & 2.598 & 2.918 & 4.221 & N/A & N/A & 0.968 & 0.919 & 0.975 & 0.974 & 0.960 & N/A & N/A \\
  & \multicolumn{2}{c|}{NCEP}  & 68.441 & 44.787 & 2.225 & 2.742 & 4.028 & N/A & N/A & 0.969 & 0.921 & 0.979 & 0.977 & 0.966 & N/A & N/A \\
  & \multicolumn{2}{c|}{ECMWF} & 63.674 & 42.349 & 2.067 & 2.391 & \textbf{3.232} & N/A & N/A & 0.972 & 0.929 & 0.982 & 0.983 & 0.971 & N/A & N/A \\
\noalign{\vskip 1.5pt}
\cdashline{2-17}[1.2pt/2.5pt]
\noalign{\vskip 1.5pt}
  & \multicolumn{2}{c|}{Pangu} & 66.224 & 37.959 & 2.271 & 2.569 & N/A & 2.431 & 1.984 & 0.963 & 0.926 & 0.966 & 0.963 & N/A & 0.812 & 0.686 \\
  & \multicolumn{2}{c|}{FCN2}  & 62.755 & 41.020 & 2.093 & 2.390 & N/A & 2.328 & 1.896 & 0.947 & 0.896 & 0.966 & 0.957 & N/A & 0.830 & 0.712 \\
  \noalign{\vskip 1.5pt}
  \cdashline{2-17}[1.2pt/2.5pt]
  \noalign{\vskip 1.5pt}
  & \multicolumn{2}{c|}{Transformer}  & 59.748 & 38.119 & 2.038 & 2.531 & 3.662 & 2.244 & 1.793 & 0.978 & 0.948 & 0.983 & 0.981 & \underline{0.978} & 0.840 & 0.738 \\
  & \multicolumn{2}{c|}{FNO}  & 68.308 & 41.014 & 2.302 & 2.896 & 3.912 & 3.442 & 2.328 & 0.965 & 0.942 & 0.960 & 0.942 & 0.972 & 0.837 & \underline{0.742} \\
  & \multicolumn{2}{c|}{ViT}     & 62.122 & 39.474 & 2.249 & 2.852 & 4.378 & 2.379 & 1.926 & 0.976 & 0.945 & 0.979 & 0.976 & 0.969 & 0.814 & 0.678 \\
  \noalign{\vskip 1.5pt}
  \cdashline{2-17}[1.2pt/2.5pt] \noalign{\vskip 1.5pt}
  & \multicolumn{2}{c|}{ClimaX}  & 59.064 & 38.420 & 2.142 & 3.242 & 5.342 & 2.360 & 2.025 & 0.978 & 0.945 & 0.981 & 0.971 & 0.956 & 0.807 & 0.649 \\
  & \multicolumn{2}{c|}{ClimODE} & 68.767 & 39.326 & 2.518 & 3.147 & 3.782 & 2.326 & 1.793 & 0.970 & 0.946 & 0.972 & 0.970 & 0.975 & 0.827 & 0.738 \\
  & \multicolumn{2}{c|}{CirT}    & \underline{55.505} & \underline{34.345} & \underline{1.931} & \underline{2.265} & 5.478 & 2.253 & 1.873 & 0.980 & 0.955 & \underline{0.984} & 0.983 & 0.958 & 0.835 & 0.716 \\
  & \multicolumn{2}{c|}{EIMP}    & 66.498 & 41.820 & 2.312 & 2.800 & 4.785 & 3.683 & 2.137 & \underline{0.981} & \underline{0.956} & 0.983 & \underline{0.984} & 0.974 & 0.672 & 0.654 \\
  & \multicolumn{2}{c|}{TelePiT} & 59.772 & 35.199 & 2.035 & 2.448 & 4.002 & \underline{2.122} & \underline{1.746} & 0.977 & 0.953 & 0.982 & 0.980 & 0.974 & \underline{0.854} & 0.741 \\
  & \multicolumn{2}{c|}{\textbf{SOON}} & \textbf{51.956} & \textbf{33.453} & \textbf{1.824} & \textbf{2.145} & \underline{3.255} & \textbf{1.994} & \textbf{1.621} & \textbf{0.982} & \textbf{0.958} & \textbf{0.985} & \textbf{0.985} & \textbf{0.982} & \textbf{0.872} & \textbf{0.783} \\
\midrule
\multirow{12}{*}{\rotatebox{90}{\textbf{Weeks 5-6 (Days 29-42)}}}
  & \multicolumn{2}{c|}{CMA}   & 71.400 & 48.583 & 2.639 & 2.933 & 5.521 & N/A & N/A & 0.966 & 0.912 & 0.974 & 0.974 & 0.950 & N/A & N/A \\
  & \multicolumn{2}{c|}{NCEP}  & 71.332 & 46.834 & 2.343 & 2.833 & 4.406 & N/A & N/A & 0.967 & 0.913 & 0.977 & 0.976 & 0.952 & N/A & N/A \\
  & \multicolumn{2}{c|}{ECMWF}   & 68.116 & 45.330 & 2.177 & 2.511 & 4.080 & N/A & N/A & 0.968 & 0.916 & 0.979 & 0.981 & 0.970 & N/A & N/A \\
  \noalign{\vskip 1.5pt}
  \cdashline{2-17}[1.2pt/2.5pt]
  \noalign{\vskip 1.5pt}
  & \multicolumn{2}{c|}{Pangu}   & 76.939 & 47.041 & 2.829 & 2.998 & N/A & 2.679 & 2.104 & 0.956 & 0.911 & 0.958 & 0.957 & N/A & 0.783 & 0.655 \\
  & \multicolumn{2}{c|}{FCN2}    & 66.531 & 43.469 & 2.250 & 2.567 & N/A & 2.479 & 1.980 & 0.943 & 0.889 & 0.963 & 0.953 & N/A & 0.812 & 0.691 \\
  \noalign{\vskip 1.5pt}
  \cdashline{2-17}[1.2pt/2.5pt]
  \noalign{\vskip 1.5pt}
  & \multicolumn{2}{c|}{Transformer}  & 59.970 & 38.433 & 2.065 & 2.568 & \underline{3.454} & 2.279 & 1.824 & 0.977 & 0.947 & 0.983 & 0.981 & \underline{0.980} & 0.836 & 0.725 \\
  & \multicolumn{2}{c|}{FNO}     & 72.021 & 44.327 & 2.564 & 3.334 & 4.301 & 2.426 & 2.084 & 0.960 & 0.949 & 0.971 & 0.971 & 0.968 & 0.840 & 0.720 \\
  & \multicolumn{2}{c|}{ViT}     & 63.031 & 39.825 & 2.287 & 2.974 & 4.469 & 2.397 & 1.979 & 0.976 & 0.944 & 0.979 & 0.970 & 0.968 & 0.815 & 0.641 \\
  \noalign{\vskip 1.5pt}
\cdashline{2-17}[1.2pt/2.5pt]
\noalign{\vskip 1.5pt}
  & \multicolumn{2}{c|}{ClimaX}  & 58.712 & 38.271 & 2.141 & 3.300 & 5.354 & 2.394 & 2.031 & 0.978 & 0.945 & 0.982 & 0.971 & 0.956 & 0.800 & 0.643 \\
  & \multicolumn{2}{c|}{ClimODE} & 68.218 & 39.287 & 2.554 & 3.213 & 4.094 & 2.336 & 1.844 & 0.971 & 0.946 & 0.972 & 0.969 & 0.971 & 0.829 & 0.717 \\
  & \multicolumn{2}{c|}{CirT}    & \underline{55.214} & \underline{33.940} & \underline{1.920} & \underline{2.265} & 5.652 & 2.251 & 1.826 & 0.980 & 0.956 & \underline{0.984} & \underline{0.983} & 0.954 & 0.835 & \underline{0.726} \\
  & \multicolumn{2}{c|}{EIMP}    & 66.856 & 41.926 & 2.329 & 2.858 & 5.018 & 2.555 & 2.155 & \underline{0.981} & \underline{0.957} & 0.983 & 0.983 & 0.972 & 0.792 & 0.651 \\
  & \multicolumn{2}{c|}{TelePiT} & 58.770 & 35.164 & 1.953 & 2.451 & 4.056 & \underline{2.120} & \underline{1.750} & 0.978 & 0.953 & 0.982 & 0.981 & 0.975 & \underline{0.854} & 0.741 \\
  & \multicolumn{2}{c|}{\textbf{SOON}} & \textbf{52.268} & \textbf{33.437} & \textbf{1.830} & \textbf{2.156} & \textbf{3.231} & \textbf{1.993} & \textbf{1.606} & \textbf{0.982} & \textbf{0.958} & \textbf{0.985} & \textbf{0.985} & \textbf{0.982} & \textbf{0.873} & \textbf{0.786} \\
\bottomrule
\end{tabular}}
\end{threeparttable}
\vspace{-4mm}
\end{table*}

\vspace{-3mm}
\section{Experiments}
\vspace{-2mm}
\subsection{Experimental Setup}
\label{subsec:experimental_setup}
\vspace{-2mm}
\paragraph{Dataset.}
Consistent with the established benchmarking protocols~\citep{Cirt,EIMP,telepit}, we evaluate SOON on the ERA5 reanalysis dataset~\citep{hersbach2020era5}, which is widely used as a reference for global atmospheric fields. We regrid the data to a $1.5^{\circ}$ resolution, resulting in a $121 \times 240$ latitude–longitude grid. Following \citep{Cirt,EIMP}, we use 63 variables in total. This includes six pressure-level variables (geopotential $z$, specific humidity $q$, temperature $t$, u-wind $u$, v-wind $v$, and vertical velocity $w$) at ten pressure levels (10, 50, 100, 200, 300, 500, 700, 850, 925, and 1000~hPa), as well as three surface variables (2-meter temperature $t2m$, 10-meter u-wind $u10$, and 10-meter v-wind $v10$). 
Following \citep{Cirt,EIMP}, we split
the dataset chronologically: using 1979 to 2016 for training, 2017 for validation, and 2018 for testing. Detailed descriptions of the dataset can be found in Appendix \ref{app:datasets}.

\vspace{-2mm}
\paragraph{Baselines.} 
We benchmark SOON against a wide range of state-of-the-art methods, which can be divided into two categories:
(1) \textbf{Operational NWP Systems}: including the physics-based models from CMA~\citep{wu2019beijing}, NCEP~\citep{saha2014ncep}, and ECMWF~\citep{molteni1996ecmwf}.
(2) \textbf{Data-driven Models}: encompassing foundation models such as PanguWeather (Pangu)~\citep{bi2023accurate}, FourCastNetV2 (FCN2)~\citep{pathak2022fourcastnet}, ClimaX~\citep{nguyen2023climax}, and specialized S2S models including CirT~\citep{Cirt}, EIMP~\citep{EIMP}, TelePiT~\citep{telepit}, and ClimODE~\citep{verma2024climode}. Additionally, we compare against general-purpose backbone models: Transformer~\citep{vaswani2017attention}, Fourier Neural Operator (FNO)~\citep{FNO}, and Vision Transformer (ViT)~\citep{dosovitskiy2020image}. We adapt these general-purpose models to the S2S forecasting task by modifying their input and output projection heads to accept the global grid tensor while maintaining their original backbone architectures to process the tokenized spatiotemporal sequences. 
We do not compare with TianQuan-S2S~\citep{TianQuans2s},
which uses a different forecasting horizon,
and OmniCast~\citep{nguyen2025omnicast},
which operates at a different spatial resolution than ours. 
See details in Appendix \ref{app:baselines}.

 \vspace{-2mm}

\paragraph{Metrics.}
Following existing evaluation standards~\citep{rasp2020weatherbench,chaosbench}, we employ the Latitude-weighted Root Mean Square Error (RMSE) and Anomaly Correlation Coefficient (ACC) to quantify performance~\citep{chaosbench}, defined as:
\begin{equation}
\label{eq:metrics}
\scalebox{0.86}{$
\begin{aligned}
    \text{RMSE} &= \sqrt{\frac{1}{H W} \sum_{h, w} \alpha(h)\, (\hat{\bm{X}}_{h,w} - \bm{X}_{h,w})^2}, \\
    \text{ACC} &= \frac{\sum_{h,w} \alpha(h)\, \hat{\bm{X}}'_{h, w}\, \bm{X}'_{h, w}}{\sqrt{\sum_{h,w} \alpha(h)\, (\hat{\bm{X}}'_{h, w})^2}\; \sqrt{\sum_{h,w} \alpha(h)\, (\bm{X}'_{h, w})^2}},
\end{aligned}
$}
\end{equation}
where $\alpha(h) = \cos(\phi_{h}) \big/ \left(\frac{1}{H} \sum_{h'} \cos(\phi_{h'})\right)$ represents the latitude weighting factor. $\bm{X}$ and $\hat{\bm{X}}$ denote the ground truth and prediction for a given bi-weekly verification window. The anomalies are computed as $\bm{X}' = \bm{X} - \bm{C}$ and $\hat{\bm{X}}' = \hat{\bm{X}} - \bm{C}$, where $\bm{C}$ is the observational climatology.
\begin{table*}[t]
\centering
\footnotesize
\setlength{\tabcolsep}{1.8pt}
\renewcommand{\arraystretch}{1.1}
\caption{Ablation study quantifying the contribution of each SOON component. We evaluate variants by removing Anisotropic Embedding (w/o AE), Zonal Operator (w/o ZO), or Meridional Operator (w/o MO), and by replacing RMSNorm (w/o RN) with LayerNorm (w/ LN).}
\vspace{-2mm}
\label{tab:ablation}
\begin{threeparttable}
\newcommand{\best}[1]{\textbf{#1}}
\newcommand{\second}[1]{\underline{#1}}
\scalebox{0.68}{
\begin{tabular}{c cc | *{4}{c} | *{3}{c} | *{4}{c} | *{3}{c}}
\toprule
  & \multicolumn{2}{c|}{\multirow{3}{*}{\textbf{Model}}}
  & \multicolumn{7}{c|}{\textbf{RMSE} ($\downarrow$)}
  & \multicolumn{7}{c}{\textbf{ACC} ($\uparrow$)} \\
\cmidrule(lr){4-10}\cmidrule(lr){11-17}
  &  &
  & \multicolumn{4}{c|}{\textbf{Pressure Level}}
  & \multicolumn{3}{c|}{\textbf{Single Level}}
  & \multicolumn{4}{c|}{\textbf{Pressure Level}}
  & \multicolumn{3}{c}{\textbf{Single Level}} \\
\cmidrule(lr){4-7}\cmidrule(lr){8-10}\cmidrule(lr){11-14}\cmidrule(lr){15-17}
  &  &
  & $z500(gpm)$ & $z850(gpm)$ & $t500(K)$ & $t850(K)$ & $t2m(K)$ & $u10(m/s)$ & $v10(m/s)$
  & $z500(gpm)$ & $z850(gpm)$ & $t500(K)$ & $t850(K)$ & $t2m(K)$ & $u10(m/s)$ & $v10(m/s)$ \\
\midrule
\multirow{6}{*}{\rotatebox{90}{\textbf{Weeks 3-4}}}
  & \multicolumn{2}{c|}{\textbf{SOON}} & \textbf{51.956} & \textbf{33.453} & \textbf{1.824} & \textbf{2.145} & \textbf{3.255} & \textbf{1.994} & \textbf{1.621} & \textbf{0.982} & \textbf{0.958} & \textbf{0.985} & \textbf{0.985} & \textbf{0.982} & \textbf{0.872} & \textbf{0.783} \\
  & \multicolumn{2}{c|}{w/o AE}  & 63.472 & 41.836 & 2.618 & 3.041 & 4.356 & 3.604 & 3.052 & 0.958 & 0.921 & 0.946 & 0.914 & 0.952 & 0.828 & 0.732 \\
  & \multicolumn{2}{c|}{w/o ZO} & 61.207 & 39.118 & 2.467 & 2.742 & 4.061 & 3.053 & 3.441 & 0.970 & 0.939 & 0.953 & 0.934 & 0.961 & 0.818 & 0.721 \\
  & \multicolumn{2}{c|}{w/o MO}  & 64.218 & 42.507 & 2.334 & 3.176 & 4.119 & 3.284 & 3.121 & 0.955 & 0.926 & 0.962 & 0.908 & 0.958 & 0.844 & 0.764 \\
  & \multicolumn{2}{c|}{w/ unshared ZO}  & 54.124 & 35.021 & 1.940 & 2.217 & 3.421 & 2.128 & 1.822 & 0.968 & 0.956 & 0.977 & 0.980 & 0.976 & 0.860 & 0.776 \\
  & \multicolumn{2}{c|}{w/o RN, w/ LN} & 62.948 & 40.917 & 2.412 & 2.957 & 4.047 & 3.067 & 3.302 & 0.968 & 0.932 & 0.959 & 0.924 & 0.966 & 0.856 & 0.777 \\
\midrule
\multirow{6}{*}{\rotatebox{90}{\textbf{Weeks 5-6}}}
  & \multicolumn{2}{c|}{\textbf{SOON}} & \textbf{52.268} & \textbf{33.437} & \textbf{1.830} & \textbf{2.156} & \textbf{3.231} & \textbf{1.993} & \textbf{1.606} & \textbf{0.982} & \textbf{0.958} & \textbf{0.985} & \textbf{0.985} & \textbf{0.982} & \textbf{0.873} & \textbf{0.786} \\
  & \multicolumn{2}{c|}{w/o AE}  & 63.761 & 43.284 & 2.512 & 2.913 & 3.982 & 3.617 & 2.621 & 0.965 & 0.938 & 0.954 & 0.942 & 0.964 & 0.834 & 0.742 \\
  & \multicolumn{2}{c|}{w/o ZO} & 61.543 & 40.643 & 2.721 & 2.783 & 3.889 & 3.318 & 2.952 & 0.974 & 0.936 & 0.951 & 0.953 & 0.963 & 0.821 & 0.728 \\
  & \multicolumn{2}{c|}{w/o MO}   & 62.872 & 44.019 & 2.407 & 2.792 & 3.771 & 3.544 & 2.386 & 0.961 & 0.949 & 0.960 & 0.938 & 0.971 & 0.846 & 0.759 \\
  & \multicolumn{2}{c|}{w/ unshared ZO}  & 55.213 & 36.221 & 1.960 & 2.322 & 3.501 & 2.012 & 1.708 & 0.970 & 0.950 & 0.978 & 0.980 & 0.980 & 0.862 & 0.766 \\
  & \multicolumn{2}{c|}{w/o RN, w/ LN} & 63.318 & 42.674 & 2.581 & 2.979 & 4.107 & 3.702 & 2.561 & 0.970 & 0.952 & 0.967 & 0.936 & 0.973 & 0.857 & 0.772 \\
\bottomrule
\end{tabular}}
\end{threeparttable}
\vspace{-3mm}
\end{table*}

\begin{figure*}[!ht]
    \centering
    \includegraphics[width=0.9\linewidth, trim=16 16 8 8, clip]{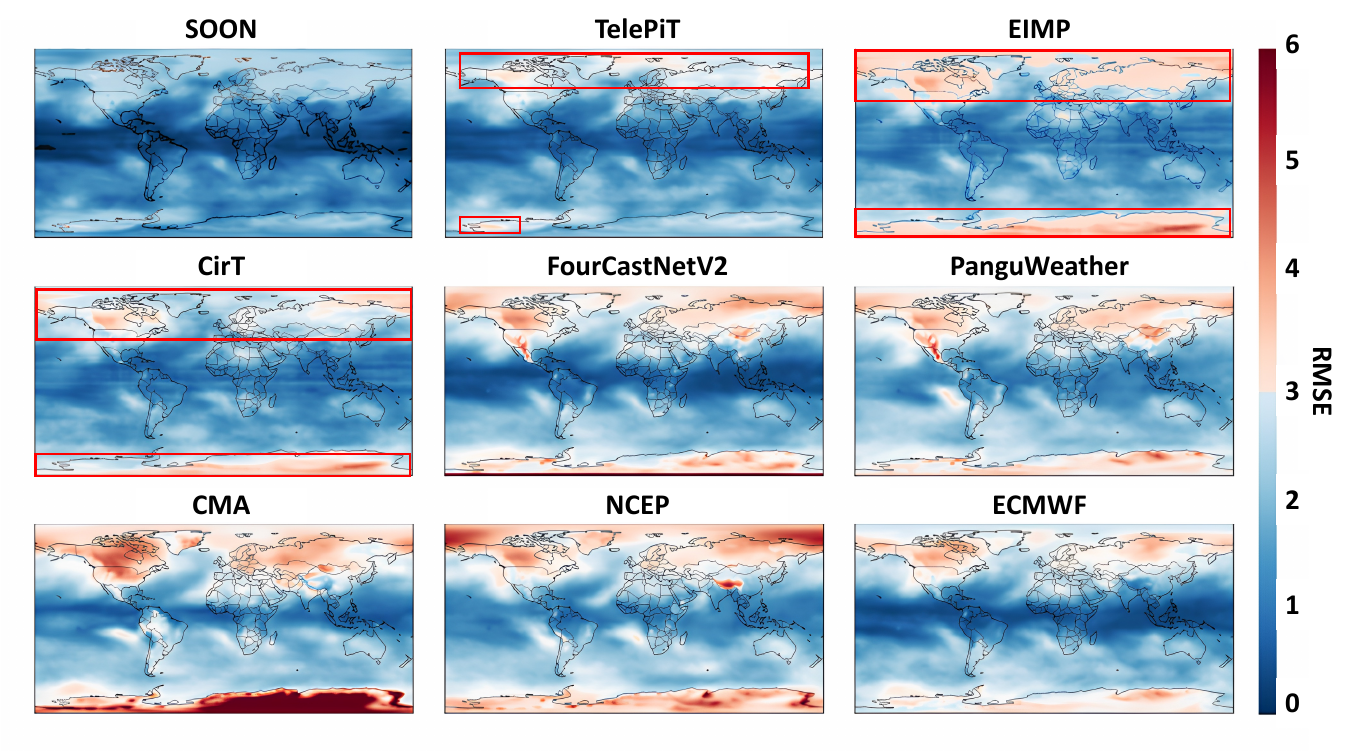}
    \vspace{-2mm}
    \caption{The global RMSE distribution of $t850$ with lead times weeks 3-4 in the testing set.}
    \label{fig:t850_w3-4}
    \vspace{-6mm}
\end{figure*}

 \vspace{-4mm}

\paragraph{Implementation Details.} 
For PanguWeather and FourCastNetV2, we utilize the official pre-trained weights provided by the ECMWF project through API (see Appendix \ref{app:baselines} for details).
For the other baselines
(except the Operational NWP Systems CMA, NCEP, ECMWF),
we train 
them from scratch using a unified experimental setting to ensure fair comparison (batch size 32, hidden dimension 256, 
16 
attention heads, and
trained for 20 epochs using the AdamW optimizer with an initial learning rate of 0.01). The implementation is based on PyTorch Lightning and training is conducted on 8 NVIDIA H20 80G GPUs. 
Inference is
performed
on a single NVIDIA H100 80G GPU. More implementation details are in Appendix~\ref{app:implementation}.

\vspace{-3mm}

\subsection{Overall Performance}
\vspace{-2mm}
Table~\ref{tab:mainresults}
shows the quantitative comparison against operational NWP systems and state-of-the-art data-driven baselines. As can be seen, SOON consistently achieves the best performance across almost all evaluated metrics, surpassing the leading operational systems, pretrained climate foundation models, general purpose models, as well as recent data-driven climate forecasting models by a substantial margin. Notably, SOON demonstrates exceptional robustness in the extended weeks 5--6 window, maintaining high ACC and low RMSE while other methods suffer from significant degradation. 
This consistent superiority validates that our anisotropic modeling of wave-flow dynamics combined with the symmetric operator splitting scheme can effectively mitigate the 
accumulation of errors inherent in long-term iterative forecasting. Furthermore, SOON exhibits comprehensive strength across diverse physical regimes, ranking first in both stable upper-air fields and highly chaotic surface variables. 
We also evaluate SOON with various values of $L$ (the number of SOON blocks), and the results are presented in Appendix~\ref{app:hyper_L}.
Additional  comparisons on spectral energy distribution are provided in Appendix~\ref{app:SpecRes}.

\begin{figure*}[t]
    \centering
    \includegraphics[width=0.88\linewidth]{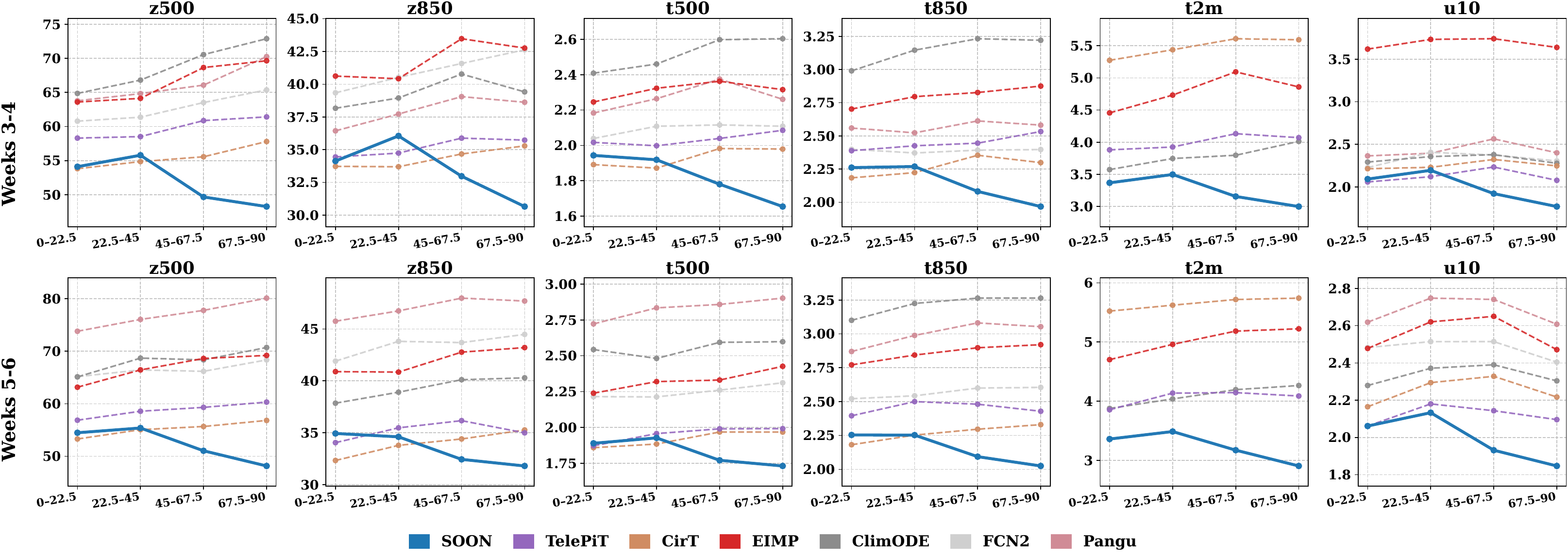}
    \vspace{-3mm}
    \caption{RMSE variation across latitudinal bands. SOON demonstrates superior accuracy, achieving the lowest errors near the poles.}
    \label{fig:lat_rmse}
    \vspace{-1em}
\end{figure*}

\vspace{-2mm}

\subsection{Ablation Study}
\vspace{-2mm}
\label{subsec:ablation}
To validate the effectiveness of our design choices, 
Table~\ref{tab:ablation}
compares SOON with variants with
several crucial components
removed.
The detailed experimental settings are in Appendix~\ref{app:implementation}. The results show that removing Anisotropic Embedding
causes the most severe performance degradation across all variables, confirming that standard isotropic patching disrupts continuous zonal wave structures essential for S2S dynamics. Similarly, omitting either Zonal Operator
or Meridional Operator leads to significant error increases, validating the necessity of explicitly decoupling spectral wave propagation from spatial transport. 
Additionally, using an unshared-weight Zonal Operator also leads to worse performance, indicating that the symmetric structure of SOON numerically reduces error.
Furthermore, the variant using LayerNorm consistently underperforms the RMSNorm baseline, which empirically supports our theoretical analysis that RMSNorm better preserves background energy for long-term integration. 

\begin{figure*}[!ht]
    \centering
    \includegraphics[width=0.88\linewidth]{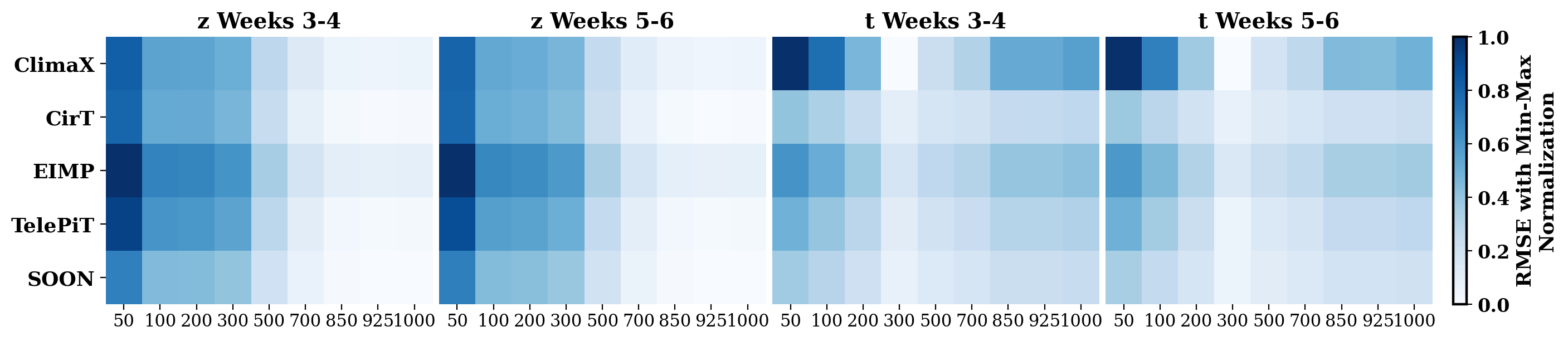}
    \vspace{-3mm}
    \caption{RMSE comparison between SOON and data-driven models on variable $z$ and $t$ across different pressure levels.}
    \label{fig:vertical_rmse}
    \vspace{-6mm}
\end{figure*}

\vspace{-2mm}

\subsection{Empirical Analysis}
\vspace{-2mm}
\paragraph{Global Visualization.}
Figure~\ref{fig:t850_w3-4} visualizes the spatial distribution of RMSE for $t850$. A striking pattern emerges among state-of-the-art data-driven baselines: they all exhibit significant error accumulation in high-latitude regions, as highlighted by the red boxes. We attribute this common failure to their reliance on isotropic modeling assumptions. This approach inherently struggles with the severe projection distortion near the poles. However, SOON maintains a low error profile globally. By adopting anisotropic embedding, SOON respects the distinct physical topology of latitudinal rings, immunizing predictions against high-latitude distortion. See more results in Appendix~\ref{app:globalvis}.

\vspace{-4mm}

\begin{figure}[!ht]
    \centering
    \includegraphics[width=0.9\linewidth, trim=5 5 5 25, clip]{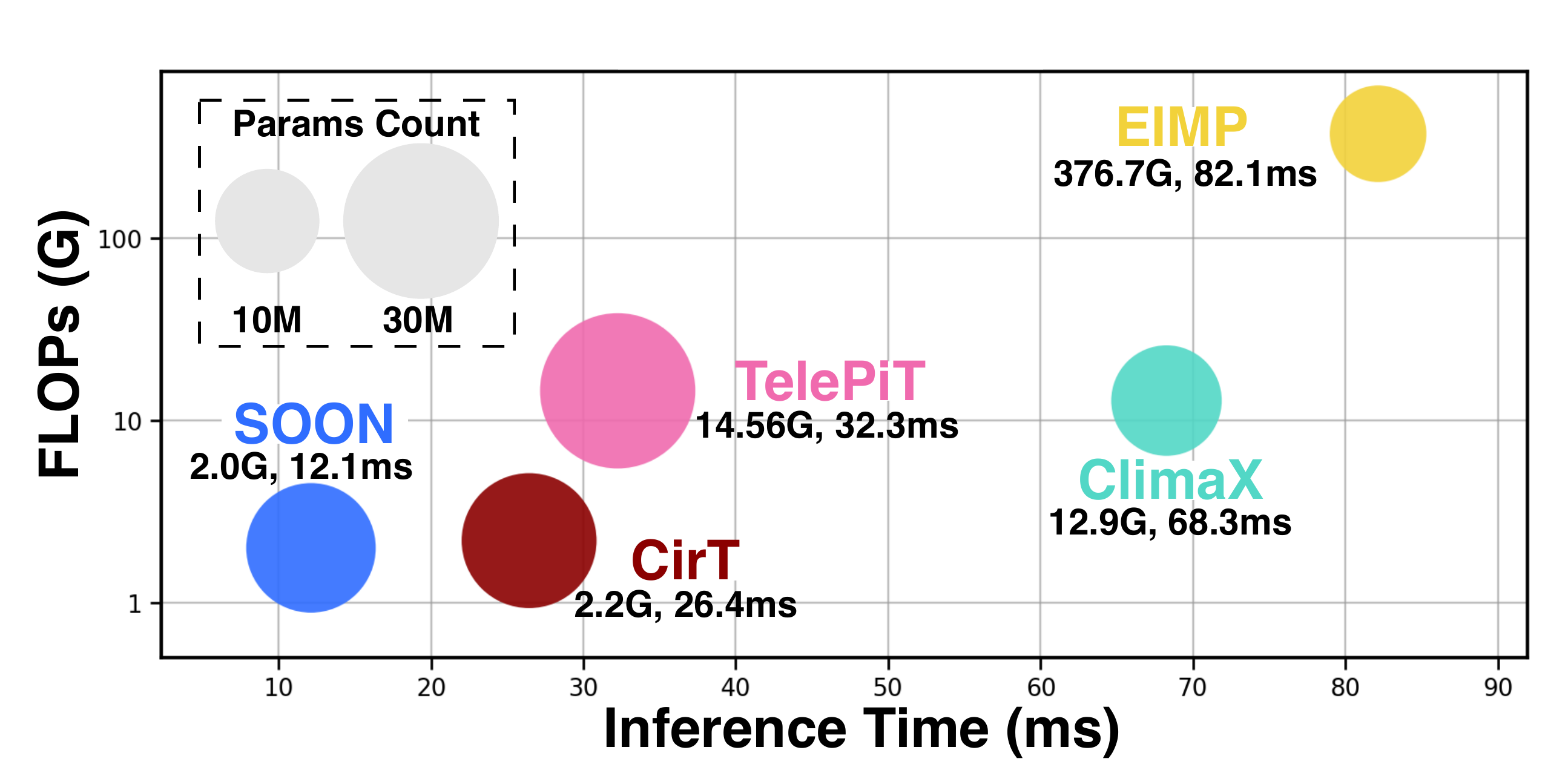}
    \vspace{-2mm}
    \caption{Efficiency analysis on the ERA5 dataset. Metrics are measured with batch size 1 and hidden dimension 256 for a fair comparison. Lower values indicate higher efficiency.}
    \label{fig:efficiency}
    \vspace{-4mm}
\end{figure}

\vspace{-2mm}

\paragraph{Quantitative Latitudinal Analysis.}
Figure~\ref{fig:lat_rmse} breaks down the RMSE across four latitudinal bands ranging from the equator ($0^\circ$) to the pole ($90^\circ$). A clear divergence in performance trends is observed: standard isotropic models
exhibit a sharp upward error trend as latitude increases, confirming their vulnerability to grid convergence and projection distortion. In stark contrast, SOON demonstrates a unique inverse trend, where forecasting error often decreases or remains minimal in the polar region. This quantitative evidence reinforces that our anisotropic ring embedding not only resolves the singularity issue but effectively leverages the strong zonal continuity of polar dynamics, turning geometric constraints into a structural advantage.

\vspace{-2mm}

\paragraph{Performance Variation across Pressure Levels.}
Figure~\ref{fig:vertical_rmse} visualizes the normalized RMSE for variables ($z$) and ($t$) across 9 pressure levels. SOON consistently exhibits superior accuracy from the stratosphere to the near-surface. In contrast, baselines show distinct performance degradation, particularly in the upper atmosphere (50--200 hPa). This vertical consistency confirms that our anisotropic embedding and symmetric operator design effectively preserve the physical coherence of atmospheric profiles across all altitudes. See more visualizations in Appendix \ref{app:latidunal}. Additionally, we present daily forecasting performance within the bi-weekly horizon in Appendix~\ref{app:dailyvisal}, and monthly forecasting performance over the full testing set in Appendix~\ref{app:monthlyvisual}.

\vspace{-3mm}

\paragraph{Efficiency Comparison.}
Figure~\ref{fig:efficiency} highlights the efficiency of SOON. Our model substantially reduces computational burdens compared to existing baselines. Notably, SOON achieves an inference speedup of over 2$\times$ and 5$\times$ compared to CirT and ClimaX, respectively. This efficiency stems from our anisotropic design, replacing quadratic attention mechanisms with linear-complexity Zonal and Meridional operators, ensuring scalability for practical forecasting.

\vspace{-4mm}

\section{Conclusion and Future Work}
\vspace{-2mm}
This paper addresses the fundamental mismatch between existing isotropic baselines and anisotropic atmospheric dynamics in S2S forecasting. We propose SOON, a physics-aligned architecture that explicitly decouples zonal waves from meridional transport via anisotropic ring embedding and a symmetric Strang splitting scheme. Empirical results on ERA5 confirm that SOON establishes a new state-of-the-art in both accuracy and efficiency. We discuss the limitations of SOON in Appendix~\ref{appendix:limitations}. Future work will focus on integrating generative frameworks to quantify forecast uncertainty essential for chaotic S2S timescales.

\clearpage

\section*{Impact Statements}
\vspace{-2mm}
This work contributes to global subseasonal-to-seasonal forecasting. We do not foresee any negative ethical consequences or societal impacts from it.

\bibliography{main}
\bibliographystyle{main}

\newpage
\appendix
\onecolumn

\onecolumn
\clearpage
\addtocontents{toc}{\protect\setcounter{tocdepth}{2}}
\appendix

\begin{center}
    \Large{\Huge Appendix (Supplementary Material)\\
    \vspace{3mm}
    \large SOON: Symmetric Orthogonal Operator Network for Global \\ Subseasonal-to-Seasonal Climate Forecasting}\\
\end{center}
\vskip 4mm
\startcontents[sections]\vbox{\sc\Large Table of Contents}
\vspace{5mm}
\hrule height .8pt
\vspace{-2mm}
\printcontents[sections]{l}{1}{\setcounter{tocdepth}{2}}
\vspace{4mm}
\hrule height .8pt
\vskip 10mm

\clearpage

\section{Related Works}
\label{sec:related_works}
\vspace{-2mm}
\subsection{Data-driven Weather Forecasting}
\vspace{-2mm}

Spatio-temporal data, together with time series data, are prevalent across various domains of daily life, among which weather data are closely related to human activities~\citep{UnravelingSpatio-TemporalFoundationModels,wang2024tsfool,SDformer}. Recently, deep learning has emerged as a computationally efficient alternative to traditional Numerical Weather Prediction (NWP) systems \citep{vitart2017subseasonal,white2017potential}. While early works focused on regional tasks~\citep{shi2015convolutional,harbola2019one}, the release of ERA5~\citep{hersbach2020era5} has shifted focus to global forecasting. Foundational benchmarks~\citep{rasp2020weatherbench} paved the way for models that rival operational NWP systems (e.g., IFS) in the medium range. Prominent architectures include Adaptive Fourier Neural Operators (FourCastNet~\citep{pathak2022fourcastnet}), Graph Neural Networks (GraphCast~\citep{lam2023learning}, AIFS~\citep{lang2024aifs}), and Transformers (PanguWeather~\citep{bi2023accurate}, FengWu~\citep{chen2023fengwu}, FuXi~\citep{chen2023fuxi}). Recent research also explores probabilistic forecasting via diffusion models~\citep{gencast}. However, the performance of these medium-range models often degrades at subseasonal scales due to the lack of physical constraints and error accumulation.

\subsection{Data-driven S2S Forecasting}
\vspace{-2mm}
Subseasonal-to-Seasonal (S2S) forecasting (2--6 weeks) remains a grand challenge due to chaotic atmospheric dynamics. While traditional dynamical models (e.g., ECMWF~\citep{molteni1996ecmwf}) face computational bottlenecks, recent global data-driven models have shown promise. ClimaX~\citep{nguyen2023climax} adapts a vision foundation model for climate tasks via fine-tuning. FuXi-S2S~\citep{FuXi-S2S} employs a cascade architecture to forecast daily averaged statistics. CirT~\citep{Cirt} introduces circular patching strategies to better handle spherical boundary conditions. EIMP~\citep{EIMP} leverages graph neural networks with rotation-equivariant message passing to model vector fields. TelePiT~\citep{telepit} combines spherical harmonics with physics-informed neural ODEs to capture multi-scale teleconnections, while TianQuan-S2S~\citep{TianQuans2s} further enhances predictability by incorporating climatology priors and noise injection. However, these approaches predominantly rely on isotropic modeling assumptions that neglect the distinct physical nature of zonal waves versus meridional transport, often suffering from high computational costs and error accumulation. In contrast, our proposed SOON introduces a physics-aligned anisotropic architecture that explicitly disentangles these dynamics, achieving superior computational efficiency and forecasting stability.

\section{Theoretical Model Complexity}
\label{app:complexity}

Here, we provide the step-by-step derivation of the time complexity for both standard Transformers and SOON. For standard Transformer-based models (e.g., ClimaX~\citep{nguyen2023climax}, ViT~\citep{dosovitskiy2020image}, CirT~\citep{Cirt}), the input grid $H \times W$ is typically tokenized into $T \propto HW$ patches. Consequently, the computational cost is dominated by the global self-attention mechanism, which scales as $\mathcal{O}(T^2 C)$, and feed-forward networks scaling as $\mathcal{O}(T C^2)$. Substituting $T$, the total complexity becomes $\mathcal{O}((HW)^2 C + HW C^2) \approx \mathcal{O}(H^2 W^2 C)$, exhibiting a quadratic dependency on both latitudinal ($H$) and longitudinal ($W$) resolutions. In contrast, SOON utilizes anisotropic embedding to compress the longitudinal dimension, resulting in a reduced sequence length of $H$. The computational burden of a single SOON block comprises three specific components: the Zonal Operator, which performs FFT and iFFT along the feature dimension with complexity $\mathcal{O}(H C \log C)$; the Meridional Operator, which applies depthwise 1D convolution with kernel size $k$ along the sequence length, costing $\mathcal{O}(H C k)$; and the point-wise feed-forward networks, contributing $\mathcal{O}(H C^2)$. Summing these terms yields a total complexity of $\mathcal{O}(H(C^2 + C \log C + Ck))$. This derivation explicitly demonstrates that the longitudinal width $W$ does not factor into the backbone's computational cost, confirming that SOON scales linearly with $H$.

\section{Theoretical Proofs}
\label{sec:theoretical_proofs}

\subsection{Proof of Theorem \ref{theorem:1}: Local Truncation Error of SOON Block}
\label{proof:theorem1}

\textbf{Theorem \ref{theorem:1}.} \textit{Let $\mathcal{L}_{\mathcal{Z}}$ and $\mathcal{L}_{\mathcal{M}}$ denote the infinitesimal generators corresponding to the Zonal and Meridional operators, respectively. Let $\mathcal{S}_{\mathrm{SOON}}(\tau)$ denote the evolution operator of one SOON block approximating the exact dynamics $e^{\tau(\mathcal{L}_{\mathcal{Z}} + \mathcal{L}_{\mathcal{M}})}$. By enforcing the symmetric composition $\mathcal{Z} \circ \mathcal{M} \circ \mathcal{Z}$ (implemented as the Strang splitting $e^{\frac{\tau}{2}\mathcal{L}_{\mathcal{Z}}} e^{\tau\mathcal{L}_{\mathcal{M}}} e^{\frac{\tau}{2}\mathcal{L}_{\mathcal{Z}}}$), the local truncation error is of third order, i.e., $\| \mathcal{S}_{\mathrm{SOON}}(\tau) - e^{\tau(\mathcal{L}_{\mathcal{Z}} + \mathcal{L}_{\mathcal{M}})} \| = \mathcal{O}(\tau^3)$, which is strictly superior to the $\mathcal{O}(\tau^2)$ error of standard sequential stacking based on the Lie-Trotter splitting \citep{trotter1959product}.}

\begin{proof}
Let $\mathcal{L}_{\text{total}} = \mathcal{L}_{\mathcal{Z}} + \mathcal{L}_{\mathcal{M}}$. The exact evolution operator is $\mathcal{E}(\tau) = \exp(\tau \mathcal{L}_{\text{total}})$. We analyze the local truncation error by comparing the logarithms of the numerical and exact operators via the Baker--Campbell--Hausdorff (BCH) formula.

\paragraph{Sequential Stacking (Lie-Trotter Splitting).} 
Standard architectures approximate the flow as $\mathcal{S}_{\text{Seq}}(\tau) = e^{\tau \mathcal{L}_{\mathcal{M}}} e^{\tau \mathcal{L}_{\mathcal{Z}}}$. Using the BCH expansion $\ln(e^X e^Y) = X + Y + \frac{1}{2}[X, Y] + \mathcal{O}(\|X, Y\|^3)$ with $X=\tau\mathcal{L}_{\mathcal{M}}$ and $Y=\tau\mathcal{L}_{\mathcal{Z}}$:
\begin{equation}
\begin{aligned}
    \ln(\mathcal{S}_{\text{Seq}}(\tau)) &= \tau(\mathcal{L}_{\mathcal{M}} + \mathcal{L}_{\mathcal{Z}}) + \frac{\tau^2}{2} [\mathcal{L}_{\mathcal{M}}, \mathcal{L}_{\mathcal{Z}}] + \mathcal{O}(\tau^3) \\
    &= \ln(\mathcal{E}(\tau)) + \frac{\tau^2}{2} [\mathcal{L}_{\mathcal{M}}, \mathcal{L}_{\mathcal{Z}}] + \mathcal{O}(\tau^3).
\end{aligned}
\end{equation}
Since $[\mathcal{L}_{\mathcal{M}}, \mathcal{L}_{\mathcal{Z}}] \neq 0$ in general, the dominant local error term is $\mathcal{O}(\tau^2)$.

\paragraph{SOON Block (Symmetric Composition $\mathcal{Z} \circ \mathcal{M} \circ \mathcal{Z}$).}
The SOON block implements the symmetric composition $\mathcal{S}_{\text{SOON}}(\tau) = e^{\frac{\tau}{2} \mathcal{L}_{\mathcal{Z}}} e^{\tau \mathcal{L}_{\mathcal{M}}} e^{\frac{\tau}{2} \mathcal{L}_{\mathcal{Z}}}$ corresponding to $\mathcal{Z} \circ \mathcal{M} \circ \mathcal{Z}$. Applying the symmetric BCH formula~\citep{hairer2006geometric} for $\ln(e^{X/2} e^Y e^{X/2})$:
\begin{equation}
    \ln(e^{\frac{X}{2}} e^Y e^{\frac{X}{2}}) = X + Y + \frac{1}{24}[Y, [Y, X]] - \frac{1}{12}[X, [X, Y]] + \mathcal{O}(\|X, Y\|^4).
\end{equation}
Substituting $X = \tau\mathcal{L}_{\mathcal{Z}}$ and $Y = \tau\mathcal{L}_{\mathcal{M}}$ (both $\mathcal{O}(\tau)$):
\begin{equation}
\begin{aligned}
    \ln(\mathcal{S}_{\text{SOON}}(\tau)) &= \tau(\mathcal{L}_{\mathcal{Z}} + \mathcal{L}_{\mathcal{M}}) + \frac{\tau^3}{24}[\mathcal{L}_{\mathcal{M}}, [\mathcal{L}_{\mathcal{M}}, \mathcal{L}_{\mathcal{Z}}]] - \frac{\tau^3}{12}[\mathcal{L}_{\mathcal{Z}}, [\mathcal{L}_{\mathcal{Z}}, \mathcal{L}_{\mathcal{M}}]] + \mathcal{O}(\tau^4) \\
    &= \ln(\mathcal{E}(\tau)) + \mathcal{O}(\tau^3).
\end{aligned}
\end{equation}
Crucially, the $\mathcal{O}(\tau^2)$ commutator term $[\mathcal{L}_{\mathcal{Z}}, \mathcal{L}_{\mathcal{M}}]$ vanishes due to the symmetry of the composition $\mathcal{Z} \circ \mathcal{M} \circ \mathcal{Z}$.

\paragraph{Error Bound.}
For sufficiently small $\tau$ and under standard regularity assumptions on $\mathcal{L}_{\mathcal{Z}}, \mathcal{L}_{\mathcal{M}}$ (e.g., generating analytic semigroups on a Banach space), the exponential map satisfies $\|e^A - e^B\| \leq C \|A - B\|$ for some constant $C>0$ independent of $\tau$~\citep{hairer2006geometric}. Setting $A = \ln(\mathcal{S}_{\text{SOON}}(\tau))$ and $B = \tau\mathcal{L}_{\text{total}}$, we obtain:
\begin{equation}
    \| \mathcal{S}_{\text{SOON}}(\tau) - \mathcal{E}(\tau) \| \leq C \| \ln(\mathcal{S}_{\text{SOON}}(\tau)) - \ln(\mathcal{E}(\tau)) \| = \mathcal{O}(\tau^3).
\end{equation}
Thus the SOON block achieves third-order local truncation error, strictly superior to the second-order local error of sequential stacking in prior global S2S forecasting works \citep{Cirt,EIMP,telepit,TianQuans2s}.
\end{proof}

\begin{remark}[\textbf{Significance in S2S Forecasting}]
The third-order local truncation error ($\mathcal{O}(\tau^3)$) of SOON yields second-order global accuracy ($\mathcal{O}(\tau^2)$), whereas Lie-Trotter splitting yields only first-order global accuracy ($\mathcal{O}(\tau)$). For Subseasonal-to-Seasonal forecasting over total time $T$ with step size $\tau$ (requiring $N = T/\tau$ steps), the accumulated global error for Lie-Trotter splitting scales as $N \cdot \mathcal{O}(\tau^2) = \mathcal{O}(\tau)$, while for SOON it scales as $N \cdot \mathcal{O}(\tau^3) = \mathcal{O}(\tau^2)$. This quadratic reduction in global error significantly delays numerical divergence from the true atmospheric manifold, which is critical for capturing chaotic dynamics and preserving slow-varying climatic patterns (e.g., MJO) deep into the predictability desert.
\end{remark}

\subsection{Proof of Proposition \ref{prop:layernorm}: Spectral Distortion of LayerNorm}
\label{appendix:proof_layernorm}

\textbf{Proposition \ref{prop:layernorm}.} \textit{Let $\mathbf{z} \in \mathbb{R}^C$ be the latent representation of a latitudinal ring token in $\bm{E}^{(0)}$ and let $\hat{\mathbf{z}} = \mathcal{F}(\mathbf{z})$ be its discrete Fourier spectrum. LayerNorm introduces spectral distortion by eliminating the zero-frequency component (background energy), i.e., $|\mathcal{F}(\text{LayerNorm}(\mathbf{z}))_0| \equiv 0$, where $\mathcal{F}(\cdot)_0$ represents the $0$-th frequency component of the DFT.}

\begin{proof}
Let $\mathbf{z} = [z_0, \dots, z_{C-1}]^\top \in \mathbb{R}^C$. The Discrete Fourier Transform (DFT) is defined as $\hat{z}_k = \sum_{n=0}^{C-1} z_n e^{-i 2\pi k n / C}$, where $k=0$ corresponds to the zero-frequency (DC) component.
Layer Normalization centers the input via $\mathbf{z}' = \mathbf{z} - \mu \mathbf{1}$, where $\mu = \frac{1}{C}\sum_{n=0}^{C-1} z_n$ is the mean value.
By the linearity of the DFT operator $\mathcal{F}$, the zero-frequency component of the centered signal is:
\begin{equation}
    \mathcal{F}(\mathbf{z}')_0 = \sum_{n=0}^{C-1} (z_n - \mu) = \sum_{n=0}^{C-1} z_n - C \cdot \left(\frac{1}{C}\sum_{n=0}^{C-1} z_n\right) = \hat{z}_0 - \hat{z}_0 = 0.
\end{equation}
The subsequent scaling by standard deviation $\sigma$ in LayerNorm ($\frac{\mathbf{z}'}{\sigma}$) involves multiplication by a scalar, which does not alter the zero value at $k=0$. Thus, LayerNorm acts as a hard high-pass filter that strictly removes the DC component $\hat{z}_0$, which represents the background energy state of the atmospheric field.
\end{proof}

\subsection{Proof of Proposition~\ref{prop:rmsnorm}: Spectral Fidelity of RMSNorm}
\label{appendix:proof_rmsnorm}

\textbf{Proposition~\ref{prop:rmsnorm}.} \textit{RMSNorm preserves the phase angles $\arg(\mathcal{F}(\text{RMSN}(\mathbf{z}))_k) = \arg(\hat{\mathbf{z}}_k)$ for all wave numbers $k$.}

\begin{proof}
RMSNorm applies a multiplicative scaling: $\text{RMSN}(\mathbf{z}) = \lambda \mathbf{z}$, where $\lambda = \big(\frac{1}{C}\sum_{n=0}^{C-1} z_n^2\big)^{-1/2}$. Since the energy must be non-negative, $\lambda$ is a strictly positive real scalar ($\lambda \in \mathbb{R}^+$). To analyze its spectral effect, we first invoke Parseval's Theorem~\citep{parseval1806memoire} to relate spatial-domain energy to spectral-domain energy.

\paragraph{Derivation of Parseval's Identity.} Using the inverse DFT definition $z_n = \frac{1}{C}\sum_{k=0}^{C-1} \hat{z}_k e^{i \frac{2\pi k n}{C}}$:
\begin{equation}
\begin{aligned}
    \sum_{n=0}^{C-1} z_n^2 &= \sum_{n=0}^{C-1} z_n z_n^* 
    = \sum_{n=0}^{C-1} \left( \frac{1}{C}\sum_{k=0}^{C-1} \hat{z}_k e^{i \frac{2\pi k n}{C}} \right) \left( \frac{1}{C}\sum_{k'=0}^{C-1} \hat{z}_{k'}^* e^{-i \frac{2\pi k' n}{C}} \right) \\
    &= \frac{1}{C^2} \sum_{k=0}^{C-1} \sum_{k'=0}^{C-1} \hat{z}_k \hat{z}_{k'}^* \underbrace{\sum_{n=0}^{C-1} e^{i \frac{2\pi (k-k') n}{C}}}_{C \cdot \delta_{k,k'}} 
    = \frac{1}{C} \sum_{k=0}^{C-1} |\hat{z}_k|^2.
\end{aligned}
\end{equation}
This establishes that the total spatial energy is proportional to the total spectral energy. Substituting this into the RMSNorm scaling factor:
\begin{equation}
    \lambda = \left( \frac{1}{C} \sum_{n=0}^{C-1} z_n^2 \right)^{-1/2} = \left( \frac{1}{C^2} \sum_{k=0}^{C-1} |\hat{z}_k|^2 \right)^{-1/2}.
\end{equation}
The spectral representation of the normalized signal is given by the linearity of the Fourier transform:
\begin{equation}
    \mathcal{F}(\text{RMSN}(\mathbf{z}))_k = \mathcal{F}(\lambda \mathbf{z})_k = \lambda \hat{z}_k.
\end{equation}
Since $\lambda$ is a positive real number, it affects only the magnitude of the complex spectral coefficients $\hat{z}_k$, but not their argument (phase angle). Therefore, for all wavenumbers $k$:
\begin{equation}
    \arg(\mathcal{F}(\text{RMSN}(\mathbf{z}))_k) = \arg(\lambda \hat{z}_k) = \arg(\hat{z}_k).
\end{equation}
Furthermore, the relative spectral energy distribution is preserved:
\begin{equation}
    \frac{|\mathcal{F}(\text{RMSN}(\mathbf{z}))_k|^2}{|\mathcal{F}(\text{RMSN}(\mathbf{z}))_{k'}|^2} = \frac{\lambda^2 |\hat{z}_k|^2}{\lambda^2 |\hat{z}_{k'}|^2} = \frac{|\hat{z}_k|^2}{|\hat{z}_{k'}|^2}, \quad \forall k, k'.
\end{equation}
Critically, unlike LayerNorm, the zero-frequency component is preserved with non-zero magnitude ($|\mathcal{F}(\text{RMSN}(\mathbf{z}))_0| = \lambda |\hat{z}_0| > 0$ when $\hat{z}_0 \neq 0$), maintaining the background energy state essential for geostrophic balance.
\end{proof}

\begin{remark}[\textbf{Physical Implication for S2S Forecasting}]
In atmospheric dynamics, the zero-frequency component ($\hat{z}_0$) corresponds to the planetary background flow energy (e.g., mean zonal kinetic energy), while higher wavenumbers represent eddy energy. LayerNorm's elimination of the background energy disrupts the energy partitioning between mean flow and transient eddies, violating fundamental conservation principles. RMSNorm's uniform spectral scaling preserves both the phase structure and relative energy distribution, including the critical background energy state, ensuring physically consistent energy cascades during long-horizon integration. This spectral fidelity prevents artificial energy leakage between scales, which is essential for maintaining correct amplitude evolution of planetary waves and mitigating forecast drift in the predictability desert.
\end{remark}

\section{Experimental Details}

\subsection{Dataset}
\label{app:datasets}

We evaluate SOON on the ERA5 reanalysis dataset~\citep{hersbach2020era5}, which provides a comprehensive record of global atmospheric conditions.
To construct the training and evaluation data for our S2S forecasting task, we downsample the original $0.25^{\circ}$ ERA5 data to a coarser resolution of $1.5^{\circ}\times1.5^{\circ}$ by sub-sampling grid points that align with the target grid.
Specifically, we extract coordinates $(\varphi,\lambda)$ on the latitude--longitude domain
$\Omega=[-90^{\circ},90^{\circ}]\times[-180^{\circ},180^{\circ})$,
where $\varphi\in\{-90^{\circ},-88.5^{\circ},\ldots,90^{\circ}\}$ and
$\lambda\in\{-180^{\circ},-178.5^{\circ},\ldots,178.5^{\circ}\}$,
resulting in a spatial resolution of $121\times240$.
This regridding strategy ensures that our input data structure is consistent with the outputs of baseline models such as PanguWeather~\citep{bi2023accurate} and FourCastNetV2~\citep{pathak2022fourcastnet} when evaluated at this resolution.
We select a set of 63 variables critical for capturing atmospheric dynamics, covering the time period from 1979 to 2018 with a temporal resolution of 1 day (daily averages).
A detailed summary of the dataset configuration is provided in Table~\ref{tab:dataset_summary}.

\begin{table*}[!ht]
\caption{Detailed configuration of the ERA5 dataset used in this work.}
\label{tab:dataset_summary}
\centering
\footnotesize
\setlength{\tabcolsep}{1.5pt}
\renewcommand{\arraystretch}{1.2}
\begin{tabular}{c c c c c c c}
\toprule
\textbf{Category} & \textbf{Variable Name} & \textbf{Symbol} & \textbf{Levels} & \textbf{Resolution} & \textbf{Lat-Lon Range} & \textbf{Time} \\
\midrule
\multirow{6}{*}{\textbf{Pressure Level}}
& Geopotential & $z$ & 10 & $1.5^{\circ}$ & $-90^{\circ}\text{S}\;180^{\circ}\text{W} \sim 90^{\circ}\text{N}\;180^{\circ}\text{E}$ & $1979 \sim 2018$ \\
& Specific Humidity & $q$ & 10 & $1.5^{\circ}$ & $-90^{\circ}\text{S}\;180^{\circ}\text{W} \sim 90^{\circ}\text{N}\;180^{\circ}\text{E}$ & $1979 \sim 2018$ \\
& Temperature & $t$ & 10 & $1.5^{\circ}$ & $-90^{\circ}\text{S}\;180^{\circ}\text{W} \sim 90^{\circ}\text{N}\;180^{\circ}\text{E}$ & $1979 \sim 2018$ \\
& U Component of Wind & $u$ & 10 & $1.5^{\circ}$ & $-90^{\circ}\text{S}\;180^{\circ}\text{W} \sim 90^{\circ}\text{N}\;180^{\circ}\text{E}$ & $1979 \sim 2018$ \\
& V Component of Wind & $v$ & 10 & $1.5^{\circ}$ & $-90^{\circ}\text{S}\;180^{\circ}\text{W} \sim 90^{\circ}\text{N}\;180^{\circ}\text{E}$ & $1979 \sim 2018$ \\
& Vertical Velocity & $w$ & 10 & $1.5^{\circ}$ & $-90^{\circ}\text{S}\;180^{\circ}\text{W} \sim 90^{\circ}\text{N}\;180^{\circ}\text{E}$ & $1979 \sim 2018$ \\
\cmidrule{1-7}
\multirow{3}{*}{\textbf{Single Level}}
& 2 Metre Temperature & $t2m$ & 1 & $1.5^{\circ}$ & $-90^{\circ}\text{S}\;180^{\circ}\text{W} \sim 90^{\circ}\text{N}\;180^{\circ}\text{E}$ & $1979 \sim 2018$ \\
& 10 Metre U Wind & $u10$ & 1 & $1.5^{\circ}$ & $-90^{\circ}\text{S}\;180^{\circ}\text{W} \sim 90^{\circ}\text{N}\;180^{\circ}\text{E}$ & $1979 \sim 2018$ \\
& 10 Metre V Wind & $v10$ & 1 & $1.5^{\circ}$ & $-90^{\circ}\text{S}\;180^{\circ}\text{W} \sim 90^{\circ}\text{N}\;180^{\circ}\text{E}$ & $1979 \sim 2018$ \\
\bottomrule
\end{tabular}
\end{table*}

\subsection{Baselines}
\label{app:baselines}

To rigorously evaluate the performance of SOON, we benchmark it against a comprehensive set of baseline models:

\paragraph{Operational Forecasting Systems.} These established numerical weather prediction (NWP) systems represent the current gold standard in operational meteorological forecasting.
\begin{itemize}
    \item \textbf{CMA:} The China Meteorological Administration implements the Beijing Climate Center's fully-coupled BCC-CSM2-HR model~\citep{wu2019beijing}, providing operational forecasts with a 60-day lead time.
    \item \textbf{NCEP:} The National Centers for Environmental Prediction deploys the Climate Forecast System version 2 (CFSv2)~\citep{saha2014ncep} to generate daily control forecasts extending to a 45-day prediction window.
    \item \textbf{ECMWF:} The European Centre for Medium-Range Weather Forecasts employs the Integrated Forecasting System (IFS)~\citep{molteni1996ecmwf}. We utilize the control forecasts from cycle 47r3, which are widely regarded as the most skillful operational baseline for S2S prediction.
\end{itemize}

\paragraph{Data-Driven Models.} We compare against both pre-trained foundation models and models trained from scratch.
\begin{itemize}
    \item \textbf{PanguWeather~\citep{bi2023accurate}:} A 3D Earth-specific Transformer that tokenizes pressure-level and surface data separately, trained using a hierarchical temporal aggregation strategy. We use the API \href{https://github.com/ecmwf-lab/ai-models}{https://github.com/ecmwf-lab/ai-models} to perform inference.
    \item \textbf{FourCastNetV2 (FCN2)~\citep{pathak2022fourcastnet}:} An iterative model based on Vision Transformers that incorporates Adaptive Fourier Neural Operators (AFNO) for efficient spatial token mixing. We use the API \href{https://github.com/ecmwf-lab/ai-models}{https://github.com/ecmwf-lab/ai-models} to perform inference.
    \item \textbf{Transformer~\citep{vaswani2017attention}:} The standard self-attention architecture adapted for global grid inputs, serving as a fundamental baseline for sequence modeling capability.
    \item \textbf{FNO~\citep{FNO}:} The Fourier Neural Operator, which learns resolution-invariant operators by parameterizing the integral kernel in the frequency domain, adapted here for global forecasting.
    \item \textbf{ViT~\citep{dosovitskiy2020image}:} The standard Vision Transformer employing a naive grid patching strategy, treating global weather fields as flat images without specific geometric adaptations.
    \item \textbf{ClimaX~\citep{nguyen2023climax}:} A foundation model for weather and climate using a Vision Transformer backbone with variable-specific tokenization and aggregation, retrained on our dataset.
    \item \textbf{ClimODE~\citep{verma2024climode}:} A physics-informed approach that models atmospheric dynamics using Neural Ordinary Differential Equations (ODEs) to capture continuous-time evolution.
    \item \textbf{CirT~\citep{Cirt}:} A Transformer-based model designed for S2S forecasting that introduces circular patching and frequency-domain mixing to address spherical boundary conditions.
    \item \textbf{EIMP~\citep{EIMP}:} A graph-based model utilizing equivariant and invariant message passing to strictly enforce rotational symmetries on spherical grids.
    \item \textbf{TelePiT~\citep{telepit}:} A physics-informed Transformer that integrates spherical harmonic embeddings with neural ODEs to explicitly capture multi-scale teleconnections.
\end{itemize}

\subsection{Implementation Details}
\label{app:implementation}

The SOON model is implemented using PyTorch 2.3.1 and trained on 8 NVIDIA GPUs via Distributed Data Parallel (DDP) with mixed precision (FP16), while FFT operations within the Zonal Operator are forced to FP32 to ensure numerical stability. We utilize an anisotropic embedding kernel size of $(1, 240)$ and a meridional spatial kernel size of $k=7$. The model is trained for 20 epochs with a global batch size of 256 using the AdamW optimizer (weight decay $10^{-5}$, $\epsilon=10^{-8}$) and gradient clipping at 1.0. The learning rate follows a cosine annealing schedule, decaying from a peak of $10^{-3}$ to $10^{-4}$. To determine the optimal architecture, we conducted a grid search on the validation set over the following hyperparameter spaces: embedding dimension $C \in \{128, 192, 256, 384, 512\}$, network depth $L \in \{4, 5, 6, 7, 8, 9\}$, decoder depth $\in \{1, 2, 3\}$, learning rate $\in \{5\times10^{-4}, 1\times10^{-3}, 5\times10^{-3}\}$, MLP ratio $\in \{2.0, 4.0, 6.0\}$, and dropout rates $\in \{0.0, 0.1, 0.2\}$. The final selected configuration is $C=256$, $L=7$, decoder depth 1, learning rate $10^{-3}$, MLP ratio 4.0, and dropout 0.1, resulting in a parameter count of approximately 16.58M.

We have also conducted comprehensive ablation studies as aforementioned in Section~\ref{subsec:ablation}. The variants are implemented as follows: 
\begin{enumerate}
\item 
Variant without Anisotropic Embedding (w/o AE): We replace the proposed $(1, W)$ anisotropic convolution with a standard isotropic patch embedding of size $4 \times 4$, which flattens the input into a generic token sequence devoid of latitudinal ring topology. 
\item
Variant without Zonal Operator (w/o ZO): We remove the spectral mechanism and replace the $\mathcal{Z} \circ \mathcal{M} \circ \mathcal{Z}$ symmetric structure with stacked Meridional operators ($\mathcal{M} \circ \mathcal{M}$) to maintain model capacity. 
\item Variant without Meridional Operator (w/o MO): We replace the spatial operator with a double Zonal structure ($\mathcal{Z} \circ \mathcal{Z}$). 
\item 
Variant with unshared-weight Zonal Operator (w/ unshared ZO): We define two independent Zonal Operators, $\mathcal{Z}_1$ and $\mathcal{Z}_2$, resulting in an asymmetric structure $\mathcal{Z}_1 \circ \mathcal{M} \circ \mathcal{Z}_2$.
\item Variant with LayerNorm (w/ LN): We replace all RMSNorm layers with standard Layer Normalization to evaluate the impact of mean-centering on long-term stability.
\end{enumerate}

\section{Additional Results}
\label{app:visua}

\subsection{Performance w.r.t. the number of SOON blocks}
\vspace{-2mm}
\label{app:hyper_L}

We evaluate the impact of network depth by varying the number of SOON blocks $L \in \{4, 5, 6, 7, 8, 9\}$. 
As can be seen from Figure~\ref{fig:hyper_L}, forecasting performance consistently improves as depth increases, peaking at $L=7$ where RMSE reaches a minimum for key variables.
Increasing the depth 
further 
yields diminishing returns or slight degradation. Thus, we select $L=7$ to balance accuracy and computational efficiency.

\begin{figure*}[!ht]
    \centering
    \includegraphics[width=\linewidth]{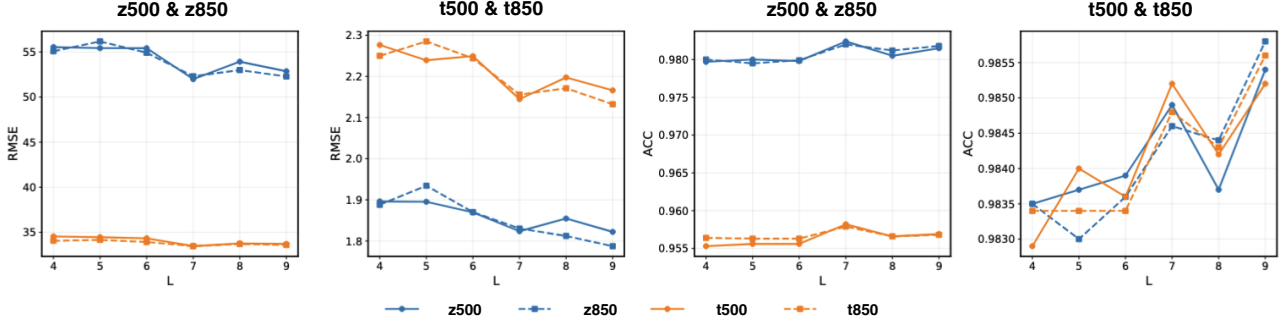} 
    \vspace{-6mm}
    \caption{RMSE and ACC variations across different numbers of SOON blocks ($L$) indicate that $L=7$ achieves the best trade-off between performance and model complexity.}
    \label{fig:hyper_L}
\end{figure*}

\subsection{Spectral Energy Distribution Analysis}
\label{app:SpecRes}
\vspace{-2mm}

\begin{figure*}[!ht]
    \centering
    \includegraphics[width=\textwidth]{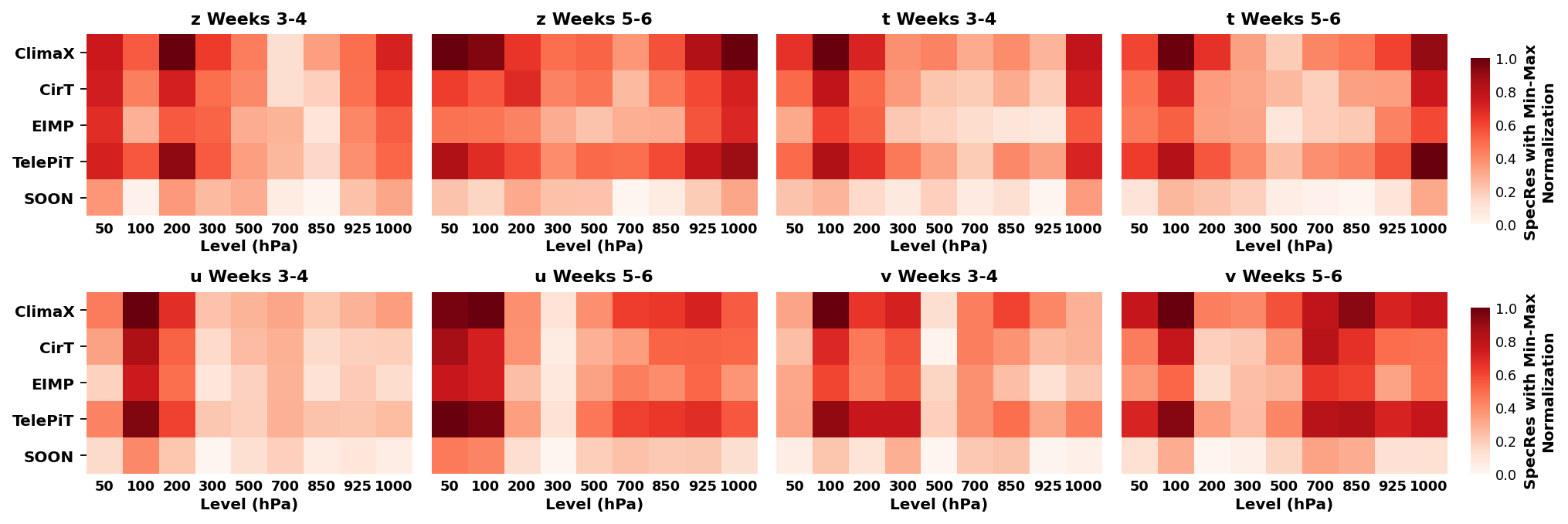}
    \caption{SpecRes comparison across pressure levels. SpecRes measures spectral energy distribution accuracy (lower is better; lighter colors indicate smaller values).}
    \label{fig:specres_levels}
\end{figure*}

We further assess spectral energy distribution accuracy using SpecRes~\citep{telepit} across all pressure levels for the prognostic variables $z$, $t$, $u$, and $v$ under weeks~3--4 and weeks~5--6 forecasts. SpecRes quantifies the root of the expected squared residual in the spectral domain, i.e., $\text{SpecRes}=\sqrt{\mathbb{E}_{k}\!\left[(\hat{S}'(k)-S'(k))^{2}\right]}$, where $\hat{S}'(k)$ and $S'(k)$ denote the normalized power spectra of prediction and target, respectively.
Therefore, lower SpecRes values indicate a more accurate preservation of spectral energy distribution, especially for high-frequency wavenumbers.
As shown in Figure~\ref{fig:specres_levels}, SOON consistently exhibits substantially lower SpecRes values than all competing models over nearly all vertical levels, indicating a more faithful preservation of spectral energy distribution during long-horizon integration.
This empirical observation aligns with our theoretical analysis: compared with mean-centering normalization, RMSNorm better preserves phase stability and background energy, thereby mitigating long-term spectral drift and improving spectral fidelity.

\subsection{Additional Global Visualizations}
\vspace{-2mm}
\label{app:globalvis}
We provide expanded global error maps to visually assess spatial performance distributions across different variables and lead times. Figure~\ref{fig:t850_w5-6} through Figure~\ref{fig:z850_w5-6} display the RMSE distributions for $t850, t500, z500,$ and $z850$ during weeks 3--4 and weeks 5--6. Consistent with the findings in the main text, SOON exhibits deep blue patterns globally, indicating low error. A recurrent weakness in baseline models is the high error accumulation in polar regions (red/orange zones), caused by the projection distortion of isotropic kernels on the sphere. SOON effectively eliminates these artifacts, maintaining high accuracy in both the tropics and high latitudes. Furthermore, even as the forecast horizon extends to weeks 5--6, SOON preserves structural details better than competitors, which tend to produce overly smooth or erroneous fields.

\begin{figure*}[!ht]
    \centering
    \includegraphics[width=\linewidth, trim=12 12 8 8, clip]{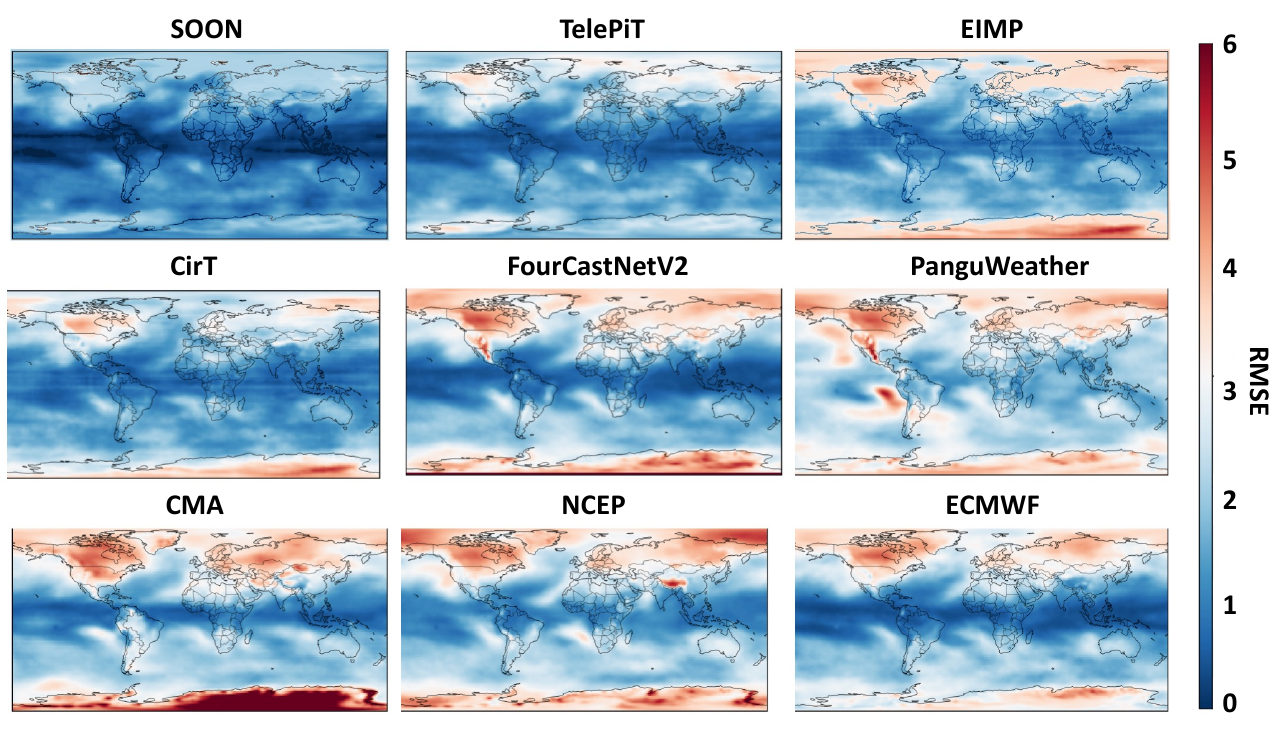}
    \vspace{-5mm}
    \caption{Global RMSE distribution for $t850$ at weeks 5--6 lead time.}
    \label{fig:t850_w5-6}
\end{figure*}

\begin{figure*}[!ht]
    \centering
    \includegraphics[width=\linewidth, trim=12 12 8 8, clip]{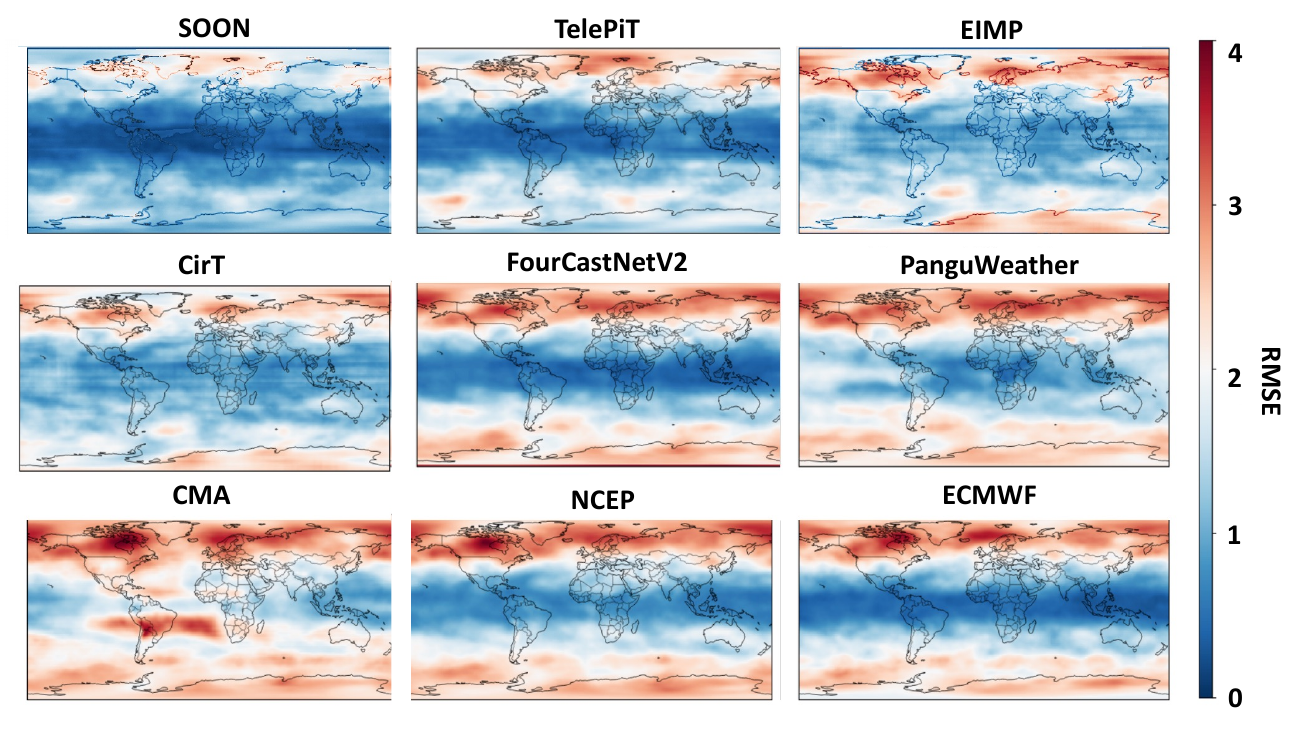}
    \vspace{-5mm}
    \caption{Global RMSE distribution for $t500$ at weeks 3--4 lead time.}
    \label{fig:t500_w3-4}
\end{figure*}

\begin{figure*}[!ht]
    \centering
    \includegraphics[width=\linewidth, trim=12 12 8 8, clip]{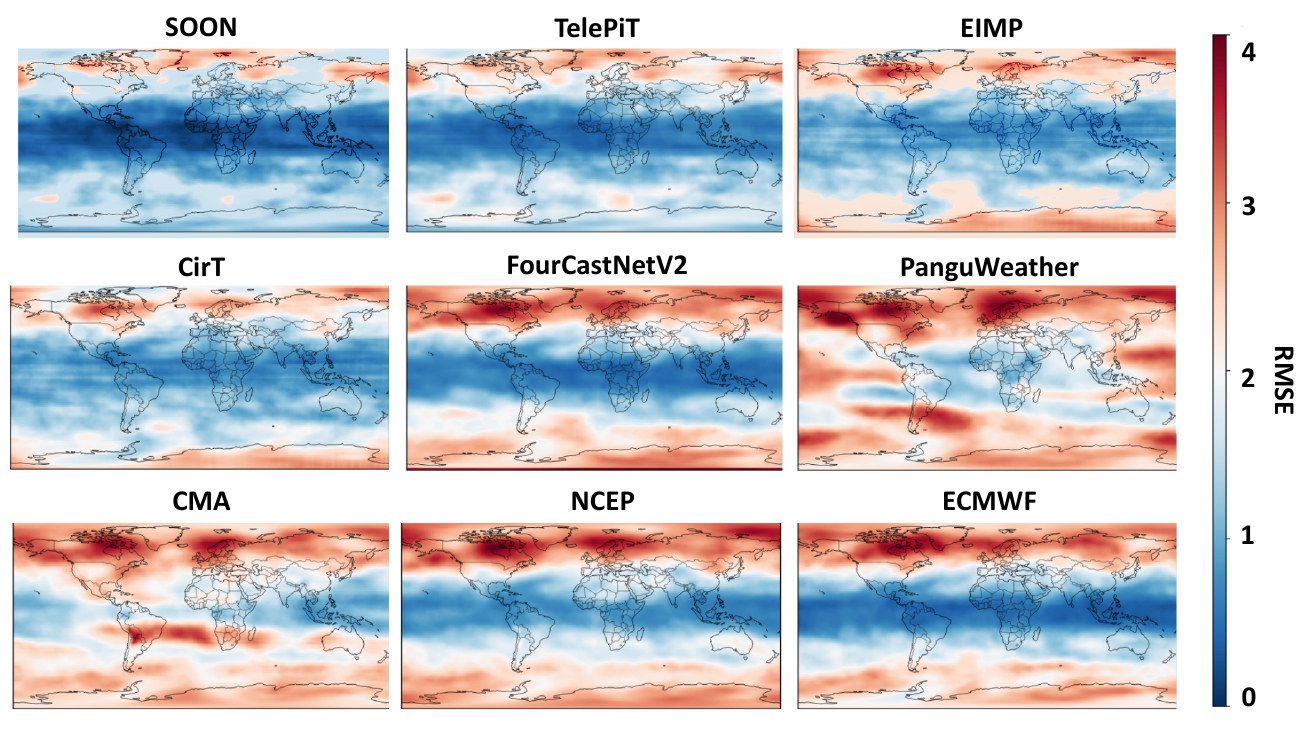}
    \vspace{-5mm}
    \caption{Global RMSE distribution for $t500$ at weeks 5--6 lead time.}
    \label{fig:t500_w5-6}
\end{figure*}

\begin{figure*}[!ht]
    \centering
    \includegraphics[width=\linewidth, trim=12 12 8 8, clip]{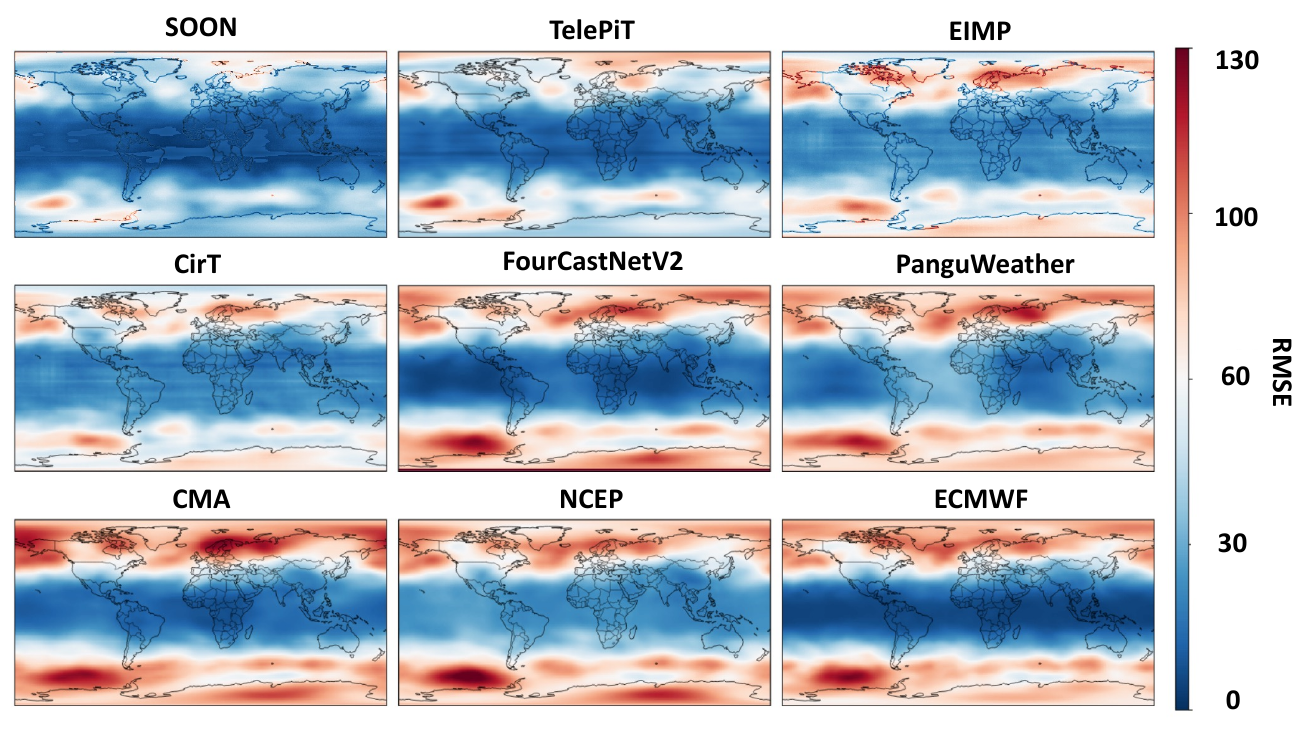}
    \vspace{-5mm}
    \caption{Global RMSE distribution for $z500$ at weeks 3--4 lead time.}
    \label{fig:z500_w3-4}
\end{figure*}

\begin{figure*}[!ht]
    \centering
    \includegraphics[width=\linewidth, trim=12 12 8 8, clip]{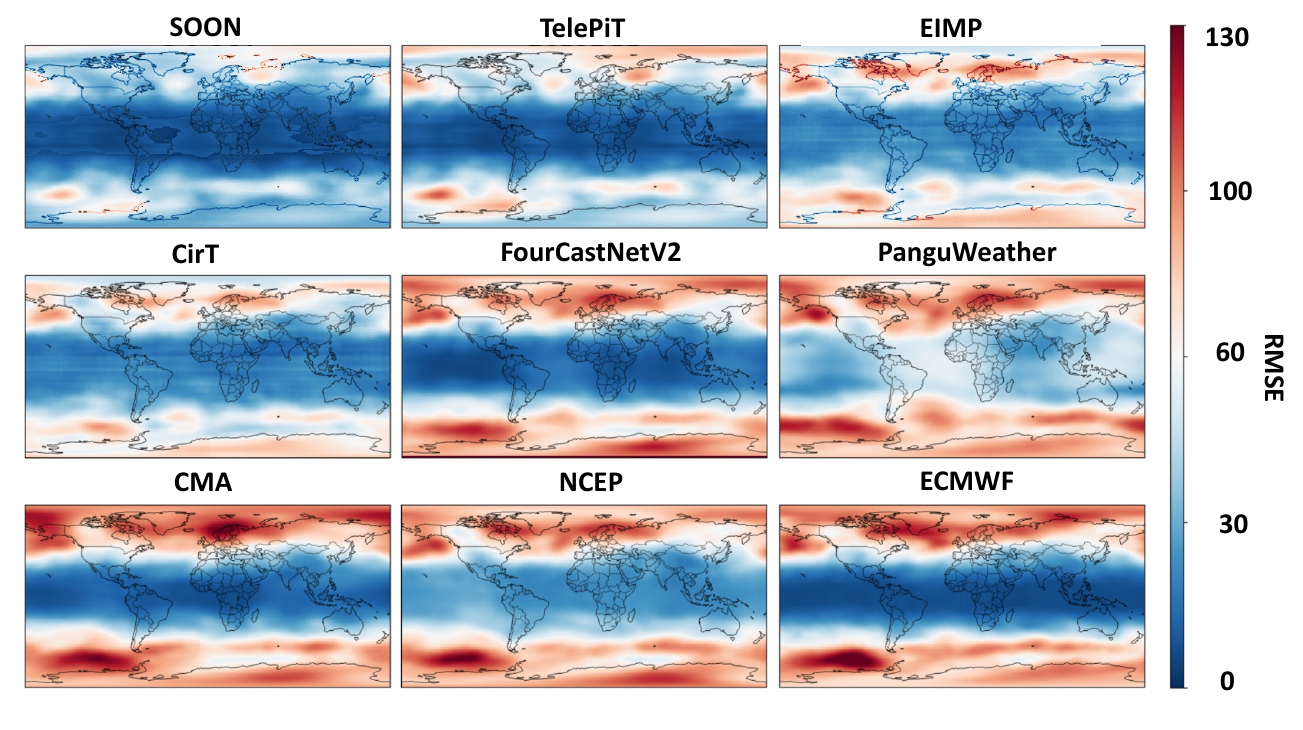}
    \vspace{-5mm}
    \caption{Global RMSE distribution for $z500$ at weeks 5--6 lead time.}
    \label{fig:z500_w5-6}
\end{figure*}

\begin{figure*}[!ht]
    \centering
    \includegraphics[width=\linewidth, trim=12 12 8 8, clip]{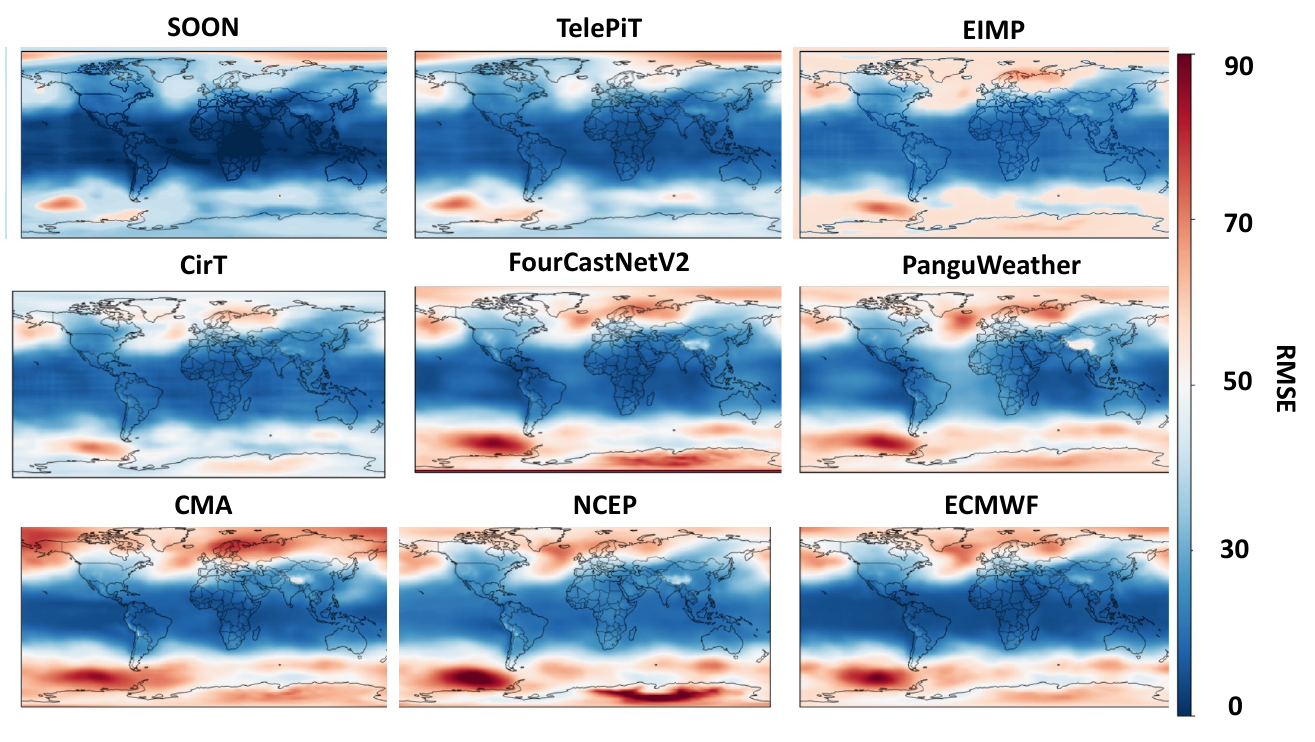}
    \vspace{-5mm}
    \caption{Global RMSE distribution for $z850$ at weeks 3--4 lead time.}
    \label{fig:z850_w3-4}
\end{figure*}

\begin{figure*}[!ht]
    \centering
    \includegraphics[width=\linewidth, trim=12 12 8 8, clip]{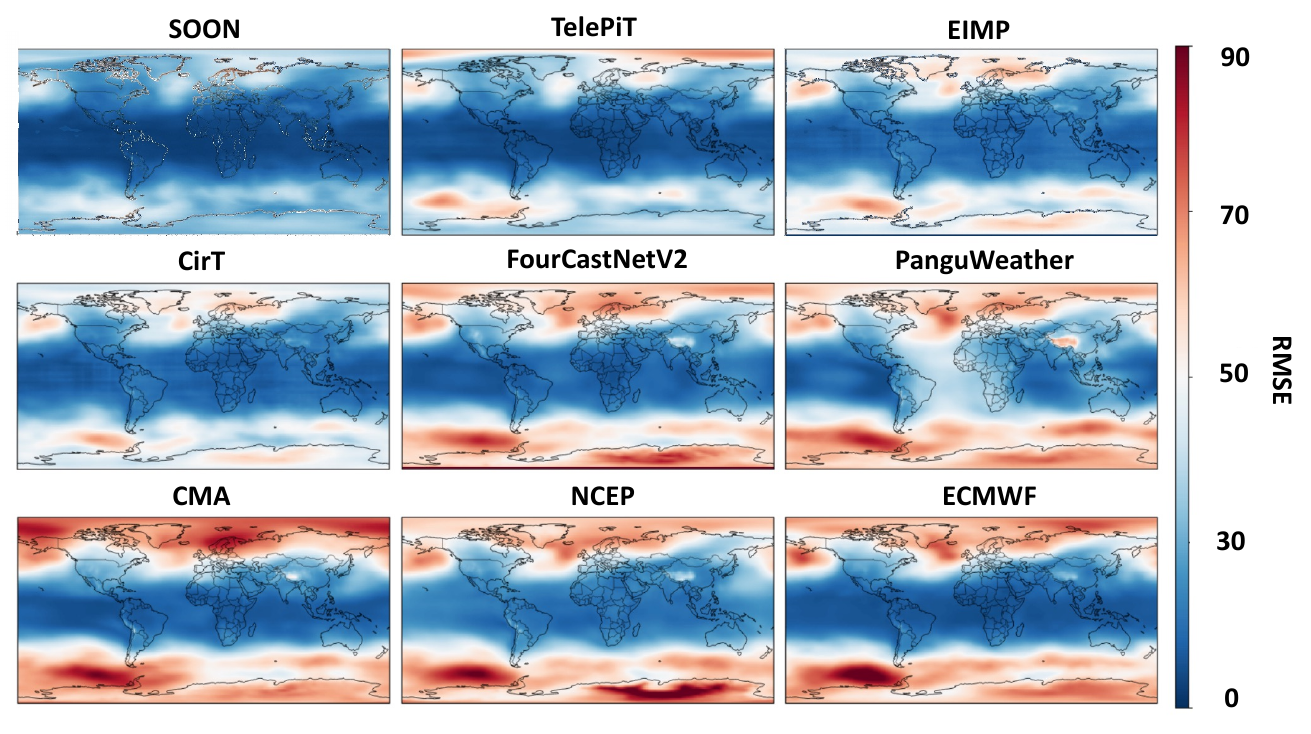}
    \vspace{-5mm}
    \caption{Global RMSE distribution for $z850$ at weeks 5--6 lead time.}
    \label{fig:z850_w5-6}
\end{figure*}

\clearpage
\subsection{Vertical Profile Analysis}
\label{app:latidunal}
We further analyze the model's performance across the vertical atmospheric column to ensure physical consistency at different altitudes. Figure~\ref{fig:vertical_rmse_uv} compares the RMSE of wind components ($u, v$) against data-driven baselines, while Figure~\ref{fig:vertical_rmse_zt} benchmarks geopotential ($z$) and temperature ($t$) against operational NWP systems. 

As shown in the heatmaps, SOON achieves the lowest normalized RMSE (indicated by the lightest colors) across almost all 13 pressure levels, from the surface (1000 hPa) up to the stratosphere (50 hPa). This vertical consistency is particularly notable in comparison to models like ClimaX and EIMP, which exhibit degradation in the upper atmosphere. The results demonstrate that SOON's anisotropic embedding and operator design effectively capture the distinct dynamics of different atmospheric layers without requiring level-specific fine-tuning.

\begin{figure*}[!ht]
    \centering
    \includegraphics[width=\linewidth]{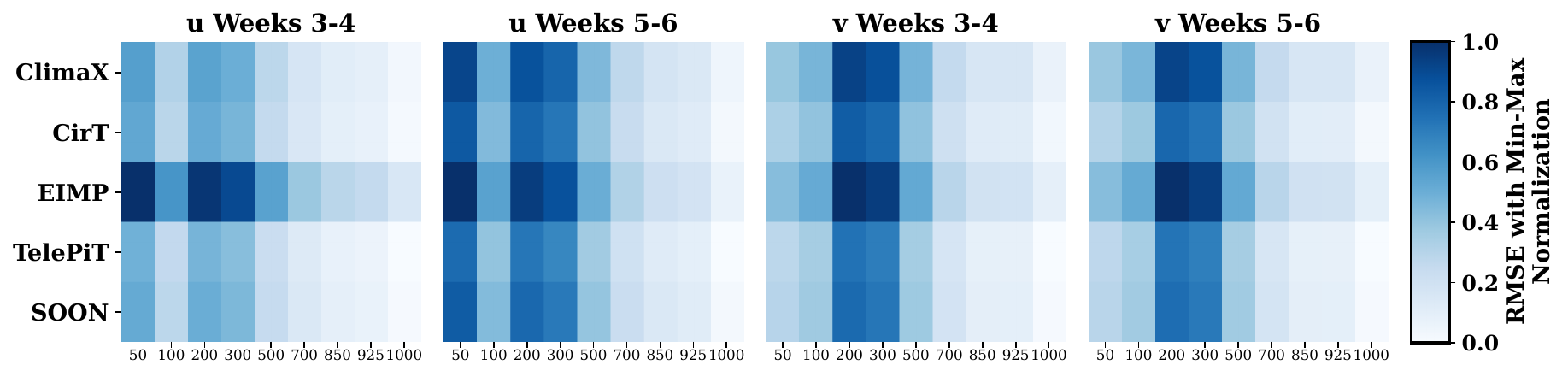}
    \vspace{-6mm}
    \caption{RMSE comparison between SOON and data-driven models on wind variables $u$ and $v$ across different pressure levels.}
    \label{fig:vertical_rmse_uv}
\end{figure*}

\begin{figure*}[!ht]
    \centering
    \includegraphics[width=\linewidth]{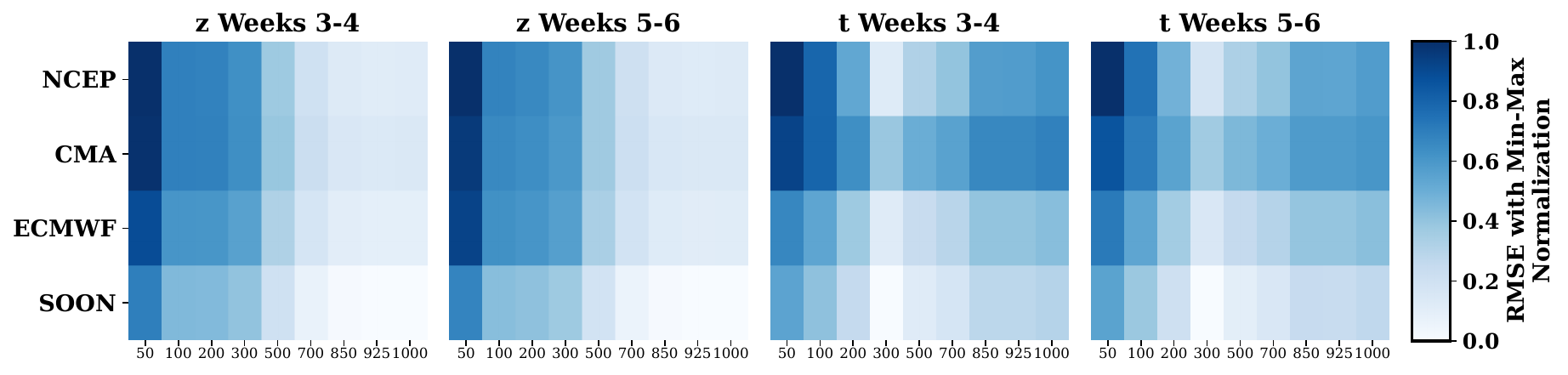}
    \vspace{-6mm}
    \caption{RMSE comparison between SOON and operational NWP systems on variables $z$ and $t$ across different pressure levels.}
    \label{fig:vertical_rmse_zt}
\end{figure*}

\subsection{Daily Visualization and Temporal Stability}
\label{app:dailyvisal}
To probe fine-grained temporal stability, we trained SOON and other baselines to output the full daily trajectory. Figure~\ref{fig:dailyw34} and Figure~\ref{fig:dailyw56} visualize the day-by-day evolution of RMSE and ACC for weeks 3--4 (days 15--28) and weeks 5--6 (days 29--42), respectively. In the weeks 3--4 window (Figure~\ref{fig:dailyw34}), SOON consistently maintains superior performance across all variables. While baselines suffer from rapid degradation over time (manifested as steep error slopes), SOON exhibits a remarkably flat error growth rate. This advantage becomes even more pronounced in the extended weeks 5--6 window (Figure~\ref{fig:dailyw56}). For highly chaotic surface variables like $v10$, the performance gap between SOON and the nearest competitor widens significantly as the lead time increases. This empirical evidence confirms that our symmetric operator splitting scheme effectively mitigates the error accumulation inherent in long-horizon forecasting, maintaining high correlation (ACC) even deep into the subseasonal range.

\begin{figure*}[!ht]
    \centering
    \includegraphics[width=\linewidth]{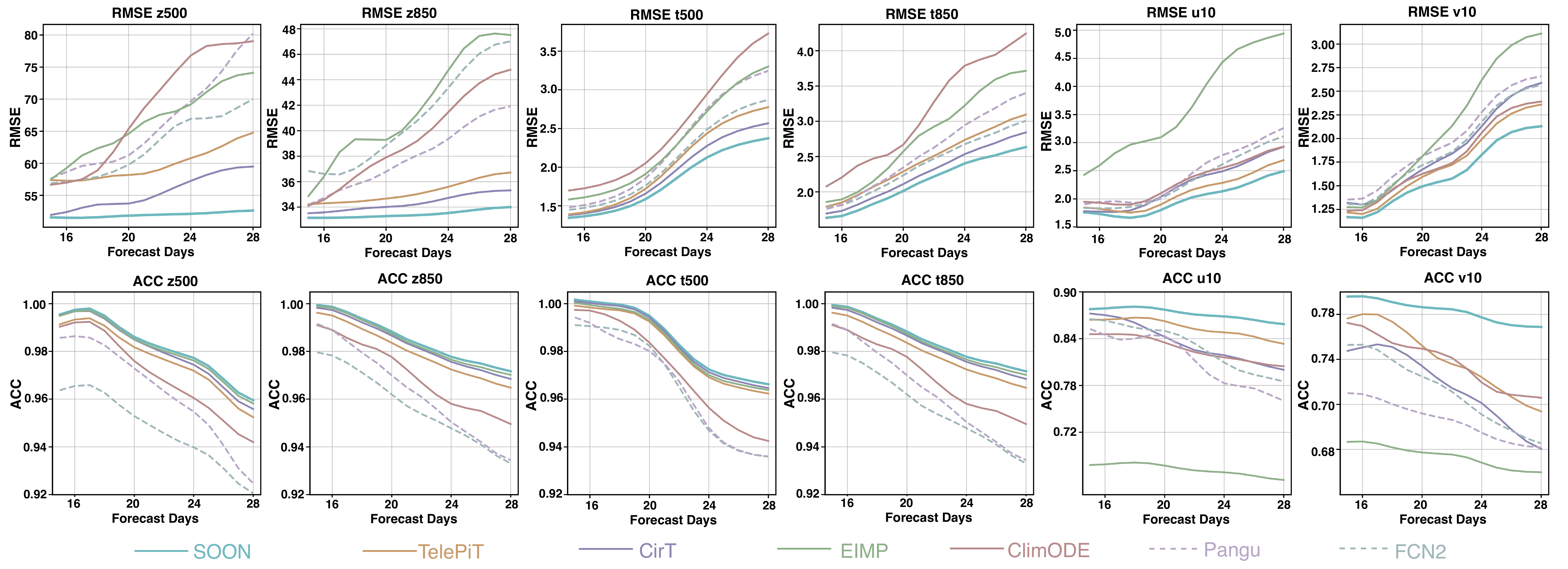}
    \vspace{-6mm}
    \caption{Daily RMSE and ACC variations across critical pressure level and single level variables during weeks 3--4 (Days 15--28).}
    \label{fig:dailyw34}
\end{figure*}

\begin{figure*}[!ht]
    \centering
    \includegraphics[width=\linewidth]{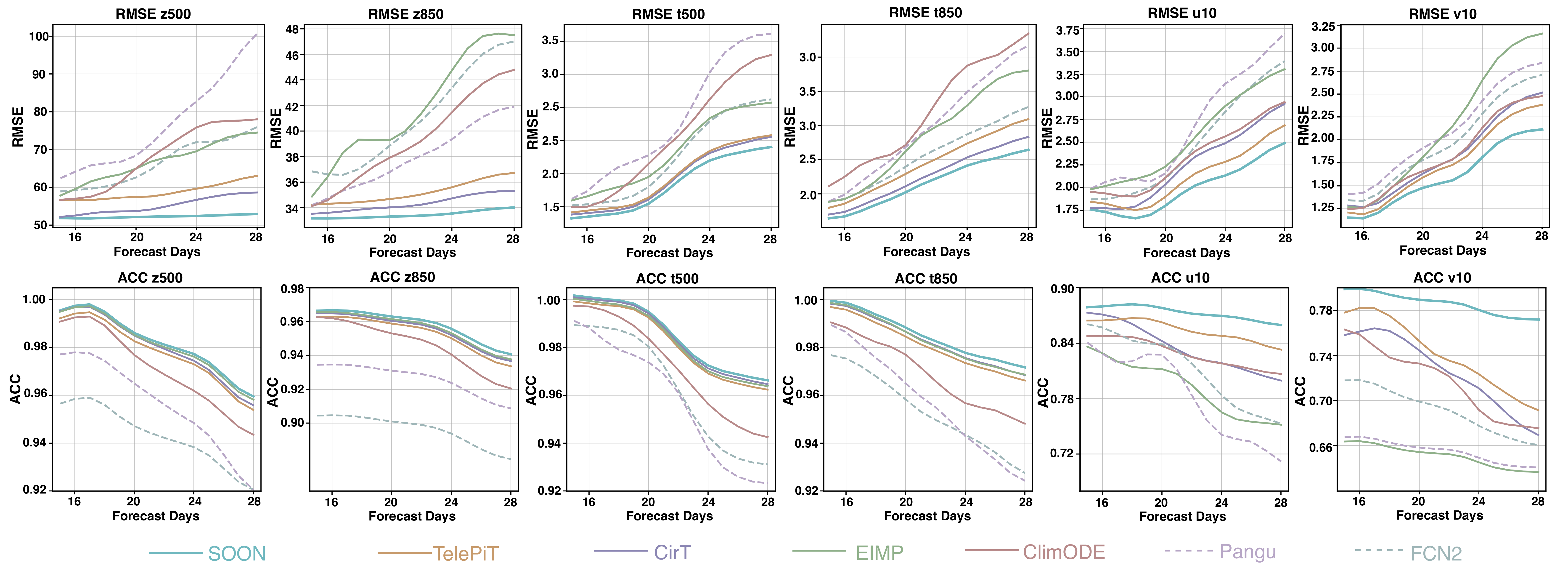}
    \vspace{-6mm}
    \caption{Daily RMSE and ACC variations across critical pressure level and single level variables during weeks 5--6 (Days 29--42).}
    \label{fig:dailyw56}
\end{figure*}

\subsection{Seasonal Robustness Analysis}
\label{app:monthlyvisual}
S2S forecasting skill often varies significantly by season due to changing atmospheric dynamics. To evaluate robustness, we break down the testing set performance by month. Figure~\ref{fig:monthlyz500} to Figure~\ref{fig:monthlyt850} illustrate the monthly RMSE for key variables ($z500, z850, t500, t850$). SOON demonstrates consistent superiority across all months of the year, outperforming both data-driven and numerical baselines. Whether in Northern Hemisphere winter (where baroclinic instability is high) or summer, SOON maintains the lowest error margins. This suggests that the model has successfully learned generalized representations of seasonal cycle dynamics rather than overfitting to specific regimes.

\vspace{100mm}

\begin{figure*}[!ht]
    \centering
    \includegraphics[width=\linewidth]{Figure/z500_monthly_revised.pdf}
    \vspace{-6mm}
    \caption{Monthly RMSE breakdown for $z500$: SOON consistently outperforms other models across all seasons.}
    \label{fig:monthlyz500}
\end{figure*}

\begin{figure*}[!ht]
    \centering
    \includegraphics[width=\linewidth]{Figure/z850_monthly_revised.pdf}
    \vspace{-6mm}
    \caption{Monthly RMSE breakdown for $z850$: SOON consistently outperforms other models across all seasons.}
    \label{fig:monthlyz850}
\end{figure*}

\begin{figure*}[!ht]
    \centering
    \includegraphics[width=\linewidth]{Figure/t500_monthly_revised.pdf}
    \vspace{-6mm}
    \caption{Monthly RMSE breakdown for $t500$: SOON consistently outperforms other models across all seasons.}
    \label{fig:monthlyt500}
\end{figure*}

\begin{figure*}[!ht]
    \centering
    \includegraphics[width=\linewidth]{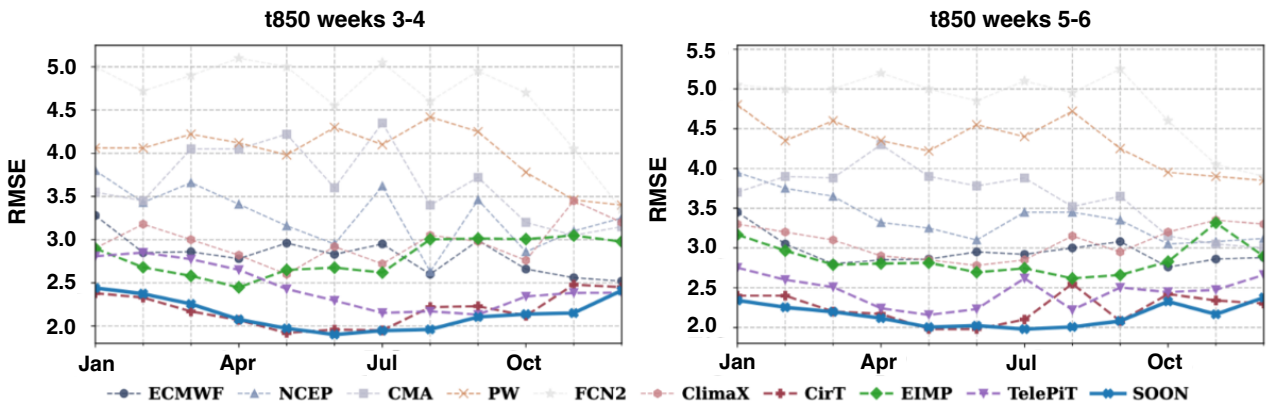}
    \vspace{-6mm}
    \caption{Monthly RMSE breakdown for $t850$: SOON consistently outperforms other models across all seasons.}
    \label{fig:monthlyt850}
\end{figure*}

\vspace{100mm}

\section{Limitations}
\label{appendix:limitations}

While SOON achieves state-of-the-art performance, it currently relies on a deterministic regression objective, which does not explicitly quantify the forecast uncertainty inherent in chaotic S2S timescales. Additionally, due to computational constraints, our experiments are conducted at a $1.5^\circ$ resolution, potentially under-resolving fine-grained sub-grid processes. Future extensions could incorporate probabilistic frameworks and higher-resolution training to further enhance utility for operational decision-making.

\end{document}